\newtheorem{theorem}{Theorem}[section]
\newtheorem{corollary}{Corollary}[section]
\newtheorem{lemma}{Lemma}[section]
\newtheorem{definition}{Definition}
\newtheorem{problem}{Problem}[section]
\newtheorem{remark}{Remark}[section]
\newtheorem{proposition}{Proposition}[section]
\let\OldStatex\Statex
\renewcommand{\Statex}[1][3]{%
  \setlength\@tempdima{\algorithmicindent}%
  \OldStatex\hskip\dimexpr#1\@tempdima\relax}
\def\ff{\mathbb{F}}
\def\zz{\mathbb{Z}}
\def\al{\alpha}
\def\ft{\mathbb{F}_2}
\def\fq{\mathbb{F}_q}
\def\fp{\mathbb{F}_p}
\def\tx{\tilde{x}}
\def\iP{P^{-1}}
\def\tA{\tilde{A}}
\def\kmA{\mathcal{A}}
\def\beq{\begin{equation}}
\def\eeq{\end{equation}}
\title{Analysis of Periodic Feedback Shift Registers}
\author{Ramachandran Anantharaman\footnote{Department of Electrical Engineering, Indian Institute of Technology - Bombay. e-mail: ramachandran@ee.iitb.ac.in} ~ and Virendra Sule\footnote{Department of Electrical Engineering, Indian Institute of Technology - Bombay. e-mail: vrs@ee.iitb.ac.in}}
\date{}
\begin{document}

\maketitle
\begin{abstract}
This paper develops methods for analyzing periodic orbits of states of linear feedback shift registers with periodic coefficients and estimating their lengths. These shift registers are among the simplest nonlinear feedback shift registers (FSRs) whose orbit lengths can be determined by feasible computation. In general such a problem for nonlinear FSRs involves infeasible computation. The dynamical systems whose model includes such FSRs are termed as Periodic Finite State systems (PFSS). This paper advances theory of such dynamical systems. Due to the finite field valued coefficients, the theory of such systems turns out to be radically different from that of linear continuous or discrete time periodic systems with real coefficients well known in literature. A special finite field version of the Floquet theory of such periodic systems is developed and the structure of trajectories of the PFSS is analyzed through that of a shift invariant linear system after Floquet transformation. The concept of extension of a dynamical system is proposed for such systems whenever the equivalent shift invariant system can be obtained over an extension field.


\end{abstract}

Category : cs.SY, math.DS






\section{Introduction}
\label{sec:intro}
In Cryptography and Communication engineering there is a need for generation of long periodic pseudorandom sequences over finite fields which have a large linear complexity or rank. Generation of such sequences is practically feasible using feedback shift registers (FSRs) whose feedback functions are nonlinear. As dynamical systems in discrete time such devices are modeled by nonlinear \emph{finite state systems} (FSS) or nonlinear maps in vector spaces over finite fields. A specialized nonlinear FSS with simple hardware is obtained by composing two linear FSRs, one master and the other slave, in which the states of master FSR decide the coefficients of the feedback function of the slave FSR. Such a composition is shown in figure \ref{fig:PFSR} which we shall call \emph{compositional FSR}. Now if the initial state of master FSR is chosen such that it lies on a periodic orbit in its state space then we have a \emph{periodic finite state discrete time linear system} (PFSS) which is a linear FSS whose coefficients are periodic functions of time with values in the finite field. This system is essentially nonlinear in the combined (product) state space of master and slave FSRs and involves quadratic product terms of their states. However this system has time varying dynamical equation in the state space of slave FSR due to the projection of the product state space onto the state space of slave FSR. Purpose of this paper is to show many results of the structure of solutions of such PFSS systems. The problem of finding orbit lengths of periodic orbits of general nonlinear FSS is known to be both theoretically and computationally hard \cite{MortRei}. This problem is computationally feasible and theoretically well understood only in the case of linear time invariant FSS \cite{Gill}. In this paper we develop analogous results for this problem for the next general class of systems the PFSS which resolves the practical question of computing periodic orbits for the compositional FSR device described above. Such an analysis of orbits of PFSS does not seem to have been much studied as is evident from almost non-existence of past references. 
\subsection{Compositional Feedback Shift Registers}
Dubrova \cite{Dubrova} suggested composition of FSRs by constructing a feedback function where the feedback function has arguments from all the FSRs and affects feedback of each individual FSR. The effect of this function is to join cycles of smaller periods to create a cycle of maximum period. In this paper, we propose two different constructions for periodic FSRs (PFSR) namely the Fibonacci type PFSR and the Galois type PFSR based on Fibonacci and Galois FSR respectively. A Fibonacci type FSR is a FSS with dynamical evolution given by the nonlinear map as in equation (\ref{eq:GenFFSR})
\begin{equation}
\label{eq:GenFFSR}
    \begin{bmatrix} x_1(k+1) \\ x_2(k+1) \\ \vdots \\ x_{n-1}(k+1) \\ x_n(k+1) \end{bmatrix} = \begin{bmatrix}x_2(k) \\ x_3(k) \\ \vdots \\ x_n(k) \\ g(x_1(k),\dots,x_n(k)) \end{bmatrix}
\end{equation}
The function $g$ is called the \emph{feedback function}. When the function is a linear sum of terms (i.e. $ a_{i}x_{i}$), the FSR is called linear FSR (LFSR). As discussed before, we introduce periodic FSRs (PFSRs) which can be constructed in hardware as composition of two FSRs (master and a slave) as follows. The independent FSR (master FSR) has its own state evolution with a time invariant feedback as in equation (\ref{eq:GenFFSR}). The other FSR (slave FSR) has the states of the master as the feedback coefficient. Let $x_i$ and $y_i$ be the states of slave and master FSRs. The dynamical evolution of the combined FSR is 
\begin{equation}
\label{eq:CombinedFFSR}
    \begin{bmatrix}x_1(k+1) \\ \vdots \\ x_n(k+1) \\ y_1(k+1) \\  \vdots \\ y_n(k+1) \end{bmatrix} = \begin{bmatrix}x_2(k) \\ \vdots \\ \sum y_i(k) x_i(k) \\ y_2(k) \\ \vdots \\ g(y_1(k),\dots,y_n(k)) \end{bmatrix}
\end{equation}
Such a construction is graphically shown in figure \ref{fig:PFSR}.
\begin{figure}[ht]
\centering
\begin{tikzpicture}[scale = 0.75]
    \draw[step=1cm] (0,0) grid (6,-1);
    \node at (0.5,-0.5) {$x_n$};
    \node at (4.5,-0.5) {$x_1$};
    \node at (5.5,-0.5) {$x_0$};
    
    \draw (3,-2) ellipse (3cm and 0.5cm);
    \node at (3,-2) {$\sum y_i x_i$};    
    \draw[step=1cm] (0,-3) grid (6,-4);
    \node at (0.5,-3.5) {$y_n$};
    \node at (4.5,-3.5) {$y_1$};
    \node at (5.5,-3.5) {$y_0$};

    \draw (3,-5) ellipse (3cm and 0.5cm);
    \node at (3,-5) {$g(y_1,\dots,y_n) $};
    \draw[->] (0.5,-1.05) to (0.7,-1.6);
    \draw[->] (0.5,-2.95) to (0.7,-2.4);
    \draw[->] (1.5,-1.05) to (1.6,-1.5);
    \draw[->] (1.5,-2.95) to (1.6,-2.5);
    \draw[->] (4.5,-1.05) to (4.4,-1.5);
    \draw[->] (4.5,-2.95) to (4.4,-2.5);
    \draw[->] (5.5,-1.05) to (5.3,-1.6);
    \draw[->] (5.5,-2.95) to (5.3,-2.4);
    \draw (-0.1,-2) -- (-2,-2);
    \draw (-2,-2) -- (-2,-0.5); 
    \draw[->] (-2,-0.5) to (-0.1,-0.5);
    \draw[->] (0.5,-4.05) to (0.7,-4.6);
    \draw[->] (1.5,-4.05) to (1.6,-4.5);
    \draw[->] (4.5,-4.05) to (4.4,-4.5);
    \draw[->] (5.5,-4.05) to (5.3,-4.6);
    \draw (-0.1,-5) -- (-2,-5);
    \draw (-2,-5) -- (-2,-3.5); 
    \draw[->] (-2,-3.5) to (-0.1,-3.5);
    
    \node at (3,0.5) {Slave FSR};
    \node at (3,-6.5){Master FSR};
 
 
\end{tikzpicture}
\caption{Compositional Fibonacci Periodic FSR}
\label{fig:PFSR}
\end{figure}
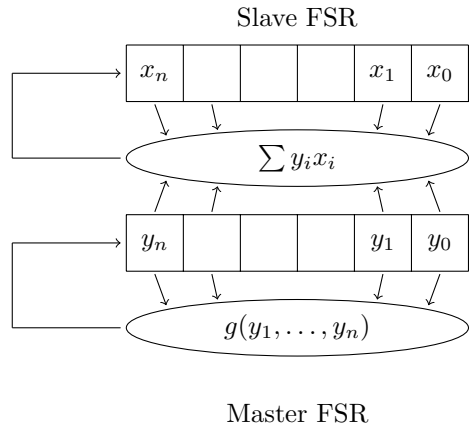
The projection of the dynamics on to the state space of slave FSR is a PFSS with each $A(k)$ matrix in companion form as follows
\begin{equation}
A(k)  = \begin{bmatrix}  0 & 1 & 0 & \dots & 0 \\ 0 & 0 & 1 & \dots & 0 \\ \vdots & \vdots & \vdots & \ddots & \vdots \\ 0 & 0 &0&\dots & 1 \\ y_1(k) & y_2(k) &y_3(k) & \dots & y_n(k) \end{bmatrix}
\label{eq:AkFPFSR}
\end{equation}
When the initial condition of the master FSR lie on a periodic orbit of period $N$, then the dynamics of the slave FSR is defined by a PFSS with $A(i)$ matrices periodic as defined in equation (\ref{eq:AkFPFSR}). 

Similar to the Fibonacci type PFSR, one can construct a Galois type PFSR using a Galois LFSR. Galois LFSR are another class of Linear FSR with the following transition map
\begin{equation}
    \begin{bmatrix} x_1(k+1)\\ x_2(k+1)\\ \vdots  \\ x_n(k+1) \end{bmatrix} = \begin{bmatrix} \al_{1} x_1(k) + x_2(k) \\ \al_2 x_1(k) + x_3(k) \\ \vdots \\ \al_n x_1(k)\end{bmatrix}
\end{equation}
A Galois PFSR is constructed when the coefficients $\al_i$ are periodic and time-varying and comes from another FSR, the master FSR. The combined FSR is given by the following equation
\begin{equation}
\label{eq:CombinedGPFSR}
    \begin{bmatrix}x_1(k+1) \\x_2(k+1) \\ \vdots \\ x_n(k+1) \\ y_1(k+1) \\  \vdots \\ y_n(k+1) \end{bmatrix} = \begin{bmatrix}y_1(k)x_1(k) + x_2(k) \\ y_2(k)x_1(k) + x_3(k) \\ \vdots \\ y_n(k)x_1(k) \\ \al_1 y_1(k) + y_2(k) \\ \vdots \\ \al_n y_1(k) \end{bmatrix}
\end{equation}
A construction for a 3-bit Galois PFSR is given in figure \ref{fig:GPFSR}.
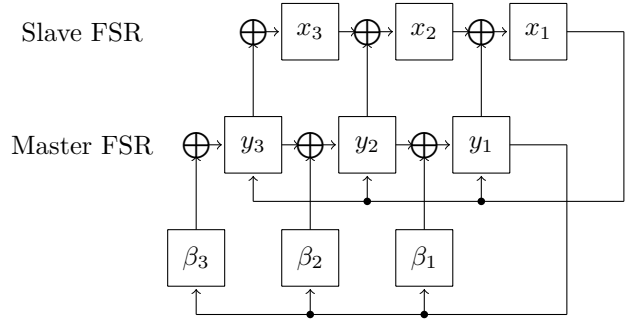
\begin{figure}[ht]
\centering
\begin{tikzpicture}[scale = 0.75]
    \draw[step=1cm] (0,0) grid (1,-1);
    \draw[step=1cm] (2,0) grid (3,-1);
    \draw[step=1cm] (4,0) grid (5,-1);
    \node at (1.5,-.5) {$\bigoplus$};
    \node at (3.5,-.5) {$\bigoplus$};
    \node at (-.5,-.5) {$\bigoplus$};
    \draw[->] (-0.3,-0.5) to (-0.05,-0.5);
    \draw[->] (1,-0.5) to (1.3,-0.5);
    \draw[->] (1.7,-0.5) to (1.95,-0.5);
    \draw[->] (3,-0.5) to (3.3,-0.5);
    \draw[->] (3.7,-0.5) to (3.95,-0.5); 
    
    \node at (0.5,-0.5) {$x_3$};
    \node at (2.5,-0.5) {$x_2$};
    \node at (4.5,-0.5) {$x_1$};
    
    
    \draw[step=1cm] (-1,-2) grid (0,-3);
    \draw[step=1cm] (1,-2) grid (2,-3);
    \draw[step=1cm] (3,-2) grid (4,-3);
    \node at (-1.5,-2.5) {$\bigoplus$};
    \node at (0.5,-2.5) {$\bigoplus$};
    \node at (2.5,-2.5) {$\bigoplus$};
    \draw[->] (-1.3,-2.5) to (-1.05,-2.5);
    \draw[->] (0,-2.5) to (.3,-2.5);
    \draw[->] (0.7,-2.5) to (0.95,-2.5);
    \draw[->] (2,-2.5) to (2.3,-2.5);
    \draw[->] (2.7,-2.5) to (2.95,-2.5); 
    
    \node at (-0.5,-2.5) {$y_3$};
    \node at (1.5,-2.5) {$y_2$};
    \node at (3.5,-2.5) {$y_1$};
    
    
    \draw[step=1cm] (-2,-4) grid (-1,-5);
    \draw[step=1cm] (0,-4) grid (1,-5);
    \draw[step=1cm] (2,-4) grid (3,-5);

    \node at (-1.5,-4.5) {$\beta_3$};
    \node at (.5,-4.5) {$\beta_2$};
    \node at (2.5,-4.5) {$\beta_1$};

    \draw[-] (4,-2.5) to (5,-2.5);
    \draw[-] (5,-2.5) to (5,-5.5);
    \draw[-] (5,-5.5) to (-1.5,-5.5);
    \draw[->] (2.5,-5.5) to (2.5,-5.05);
    \draw[->] (0.5,-5.5) to (0.5,-5.05);
    \draw[->] (-1.5,-5.5) to (-1.5,-5.05);
    \draw[->] (2.5,-4) to (2.5,-2.7);
    \draw[->] (0.5,-4) to (0.5,-2.7);
    \draw[->] (-1.5,-4) to (-1.5,-2.7);
    \draw[-] (5,-0.5) to (6,-0.5);
    \draw[-] (6,-0.5) to (6,-3.5);
    \draw[-] (6,-3.5) to (-0.5,-3.5);
    \draw[->] (3.5,-3.5) to (3.5,-3.05);
    \draw[->] (1.5,-3.5) to (1.5,-3.05);
    \draw[->] (-.5,-3.5) to (-.5,-3.05);
    \draw[->] (3.5,-2) to (3.5,-0.7);
    \draw[->] (1.5,-2) to (1.5,-0.7);
    \draw[->] (-0.5,-2) to (-0.5,-0.7);
    
    \node at (2.5,-5.5) [circle,fill,inner sep=1pt]{};
    \node at (0.5,-5.5) [circle,fill,inner sep=1pt]{};
    \node at (3.5,-3.5) [circle,fill,inner sep=1pt]{};
    \node at (1.5,-3.5) [circle,fill,inner sep=1pt]{};
         \node at (-3.5,-0.5) {Slave FSR};
         \node at (-3.5,-2.5) {Master FSR};
\end{tikzpicture}
\caption{3-bit Galois Periodic FSR}
\label{fig:GPFSR}
\end{figure}

and the matrix $A(k)$ for the slave dynamics is given as 
\begin{equation}
    A(k) = \begin{bmatrix}  y_1(k) & 1 & 0 & \dots & 0 \\ y_2(k) & 0 & 1 & \dots & 0 \\ \vdots & \vdots & \vdots & \ddots & \vdots \\ 0 & 0 &0&\dots & 1 \\ y_n(k) &0  &0 & \dots & 1 \end{bmatrix}
    \label{eq:AkGPFSR}
\end{equation}
So in general when the initial condition of the master FSR lies on a periodic orbit, the periodic FSR is a linear dynamical system of PFSS type
\begin{align}
    \begin{aligned}
    x(k+1) &= A(k) x(k)\\
    A(k+N) &= A(k)
    \end{aligned}
\end{align}
where $A(k)$ matrix is defined as in equation (\ref{eq:AkFPFSR}) or (\ref{eq:AkGPFSR}) depending on the configuration of the PFSR used.

\subsection{Background of FSS}
 Finite state discrete time (dynamical) shift invariant systems called in brief as \emph{Finite State Systems} (FSS) are a class of discrete time dynamical systems which evolve over a finite set where the next state transition depends on an evolution rule which is independent of shift in time. Choosing a sufficiently large non-negative numbers $m,n$ and a finite field $\fp$ for a suitably small prime $p$ these systems can be studied as discrete time dynamical systems with state space $V=\fq^n$ where $q=p^{m}$. The dynamical evolution of the states $x$ in $V$ of the FSS is then given by
\begin{equation}
    x(k+1) = F(x(k))
    \label{eq:FSS}
\end{equation}
where $F : V \rightarrow V$ is called the \emph{state transition map}. $F$ is described by polynomial functions of $n$ variables over $\fq$. We shall denote $\ff=\fq$ as a general finite field to describe FSS Dynamics. FSS has been a subject of interest since a long time in Electrical Engineering, Computer Science and Systems Biology as large number of physically existing systems such as sequential digital logical systems, finite state machines and Biological systems such Genetic networks to just name a few, are modeled by FSS. Several papers and books in the past have studied the topic of FSS \cite{Golomb,Gill2,Goresky,MortRei}. When the map $F(.)$ in (\ref{eq:FSS}) is $\ff$-linear and shift invariant it is denoted by a matrix $A$ over $\ff$. The system is then called \emph{linear} FSS (LFSS). Such a system has a representation as
\beq\label{LFSS}
x(k+1)=Ax(k)
\eeq
where $A$ is a square matrix over $\ff$. When $F(.)$ is an $\ff$-linear map such as linear polynomial on the extension field with $q=p^{m}$ the matrix $A$ shall depend on a choice of a basis. However as the invariant polynomials of $A$ are invariant under change of basis, the dynamical characteristics of the trajectories of the system also remain invariant with respect to a basis. LFSS are also known as \emph{linear modular systems} and have been extensively studied and the structure of their solutions is well understood since long \cite{Gill}.

In this paper we analyze the structure of solutions of the time varying linear FSS with periodic time variation of its coefficients. Such a system is described by the state space equation
\begin{equation}
\label{PFSS}
    x(k+1)=A(k)x(k)
\end{equation}
where sequence of matrices $A(k)$ over $\ff$ are periodic of period $N$, $A(k+N)=A(k)$. Such systems can be naturally called \emph{Periodic Finite State Systems} (PFSS). A striking practical application of such systems is described above by compositional FSR. 

Although FSS have had strong applications, understanding of the orbit parameters of solutions of general nonlinear FSS is still mostly an unresolved problem both theoretically and computationally. On the other hand we shall show in this paper that the special nonlinear FSS arising as PFSS is amenable to deeper analysis under certain conditions. Hence such results are applicable to the realistic systems arising from the interconnection of FSRs described above which are of significance to applications in Cryptography. A more general, time varying dynamical system in finite state space is described by a map $F(k,x):\zz^{+}\times \ff^{n}\rightarrow\ff^{n}$ by equations
\beq\label{TVFSS}
x(k+1)=F(k,x(k))
\eeq
where $x(k)$ denotes the state in $\ff^n$ at an instant $k\geq 0$.

In most of the applications of FSS in Cryptography and Systems Biology, knowledge of lengths of orbits of an FSS is a very important information. Unfortunately, computation of orbits and their lengths in general FSS involves solving NP hard computational problems. Such computations are feasible only in the case of LFSS and the theory behind such computations has been well known \cite{Gill} although algorithms to carry out such computations may still offer scope of further research. In this paper we answer this problem of computation of orbit lengths for PFSS under special conditions.

\subsection{Formulation of problems of PFSS using Floquet transformations}
For a special condition to be followed by PFSS, it is immediately tempting to consider the Floquet theory of periodic dynamical systems. While Floquet theory for continuous time dynamical systems has been a celebrated topic in non-linear dynamics and mechanics since long \cite{yakustar}, a discrete time version of Floquet theory to transform a periodic time varying discrete time linear dynamical system into a shift invariant system LFSS, seems to have developed only recently since the paper \cite{VanD}. Hence it is interesting to consider analog of Floquet theory for PFSS. This shall be the main agenda of this paper. It turns out that the PFSS theory does not follow from the discrete time results of periodic systems established in \cite{VanD}. We develop in this paper the specialized theory appropriate for PFSS.

A brief overview of results is as follows. In section \ref{sec:NSPFSS} we develop the non-singularity conditions for a PFSS and then in section \ref{sec:SolutionsPFSS}, solutions of PFSS is discussed without the use of Floquet theory. Section \ref{sec:FloquetPFSS} develops a Floquet like transform for the PFSS and section \ref{sec:FloquetSolutions} we use the Floquet theory to develop more insights on the solutions of the PFSS and algorithms are developed for computing orbit lengths of specific initial conditions, computing all orbits of a non-singular PFSS and compute an initial condition which initiates PFSS for a prescribed orbit length. Lastly in section \ref{sec:NEPFSRs} we analyze two specific numerical examples of PFSRs.

\section{Non-singularity of the dynamical systems and periodic solutions}
\label{sec:NSPFSS}

Our problem of determining the structure of solutions of periodic systems is simplified when every solution of the dynamical system is periodic. Consider a general FSS over a finite field $\ff$ defined by a state transition map $F:\ff^{n}\to \ff^{n}$. We first state condition on systems such as FSS (\ref{eq:FSS}) to have all solutions periodic. A solution $x(k)$, $k>0$ of (\ref{eq:FSS}) is said to be periodic if there is a time $T>0$ such that $x(k)=x(k+T)$ for any $k$. In fact for (\ref{eq:FSS}) solution $x(k)$ is periodic iff $x(0)=X(T)$. The smallest such $T$ is called the \emph{period} or \emph{orbit length} of the periodic solution. We call (\ref{eq:FSS}) \emph{non-singular} if the map $F:\ff^{n}\rightarrow\ff^{n}$ is one to one. Following observation then follows easily.

\begin{proposition}
\label{prop:NSFSS}
A system FSS (\ref{eq:FSS}) has all solutions periodic iff FSS is non-singular
\end{proposition}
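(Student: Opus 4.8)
The plan is to prove both directions of the equivalence, exploiting the finiteness of the state space $\ff^n$ in an essential way. The key structural fact is that on a finite set, an iterated map eventually cycles: for any $x \in \ff^n$ the sequence $x, F(x), F^2(x), \dots$ must have a repetition, say $F^i(x) = F^j(x)$ with $i < j$. So every trajectory is eventually periodic; the question is whether it is periodic from time $0$, i.e.\ whether the ``tail'' (the pre-periodic part, sometimes called the $\rho$-shaped transient) is empty for every initial condition.

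For the direction ``non-singular $\Rightarrow$ all solutions periodic'': assume $F$ is one-to-one. Since $\ff^n$ is finite, an injective self-map is automatically a bijection, hence invertible. Given any $x(0)$, pick $i<j$ with $F^i(x(0)) = F^j(x(0))$; applying $F^{-1}$ exactly $i$ times (legitimate since $F$ is invertible) yields $x(0) = F^{j-i}(x(0))$. Thus $x(0)$ lies on a cycle of length dividing $T := j-i$, and by the remark in the text preceding the proposition this is precisely the statement that the solution $x(k)$ is periodic. Since $x(0)$ was arbitrary, every solution is periodic.

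For the converse ``all solutions periodic $\Rightarrow$ non-singular'': I would argue by contrapositive. Suppose $F$ is not one-to-one, so there exist $a \neq b$ with $F(a) = F(b)$. Now consider the image $F(\ff^n) \subsetneq \ff^n$, which is a proper subset by the pigeonhole principle (a non-injective map on a finite set is non-surjective). Pick $z \in \ff^n \setminus F(\ff^n)$; then $z$ has no preimage under $F$, so the solution with $x(0) = z$ can never return to $z$ (any return $x(T) = z$ with $T \geq 1$ would exhibit $z = F(x(T-1)) \in F(\ff^n)$, a contradiction). Hence this solution is not periodic, contradicting the hypothesis. This completes the equivalence.

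I do not expect a serious obstacle here — the whole argument rests on two elementary finiteness facts (injective $\iff$ surjective $\iff$ bijective for self-maps of a finite set, and eventual cycling of iterates) together with the text's own reduction of ``periodic'' to ``$x(0)$ is recurrent.'' The only point requiring a little care is making explicit that for a periodic solution one really needs $x(0)$ itself (not merely some later state) to lie on the cycle, which is exactly why the non-surjectivity argument in the converse targets a point outside the image; stating this cleanly is the one place where the proof could read sloppily if rushed.
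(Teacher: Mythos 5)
Your proof is correct and rests on the same idea the paper uses, namely that an injective self-map of a finite set is a permutation and hence decomposes into disjoint cycles; the paper states this in one sentence, while you write out both directions explicitly (including the converse via non-surjectivity, which the paper leaves implicit). No gaps.
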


The proof follows easily by considering the fact that a one to one map $F$ on a finite set of states is a permutation of the set which is allways representable by independent cycles.

If a state $x$ lies on a unique periodic orbit then the period coincides with the number of points on the orbit. It follows from above proposition that an LFSS (\ref{LFSS}) is non-singular iff $A$ is a non-singular matrix or $\ker L=0$ where $L$ is the linear map represented by $A$.

\subsection{Non-singularity of periodic TVFSS}
Next we want to look for an analogous non-singularity property of PFSS under which all solutions of PFSS will be periodic. Such a condition can in fact be developed for a more general TVFSS (\ref{TVFSS}) whose state transition map is periodic in time. A TVFSS (\ref{TVFSS}) is said to be \emph{non-singular} if for each $k$, the transition map $F(k,x):\ff^{n}\rightarrow\ff^{n}$ is one to one. Periodicity of all solutions then follows under the additional restriction that $F(k,x)$ is periodic in $k$, i.e. there exists an $N>0$ such that $F(k+N,x)=F(k,x)$ for any $k$. We shall call such a TVFSS (\ref{TVFSS}) as \emph{periodic} TVFSS of period $N$. We then have the analogous result in the next proposition. 

\begin{proposition}
If a TVFSS (\ref{TVFSS}) is periodic then all its solutions are periodic iff the system is non-singular. 
\end{proposition}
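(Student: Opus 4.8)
The plan is to reduce this periodic time-varying statement to the shift-invariant case already handled in Proposition \ref{prop:NSFSS}, via the one-period composition (monodromy) map
\[
\Phi \;:=\; F(N-1,\cdot)\circ F(N-2,\cdot)\circ\cdots\circ F(1,\cdot)\circ F(0,\cdot)\;:\;\ff^{n}\to\ff^{n}.
\]
First I would record the elementary fact that a self-map of the finite set $\ff^{n}$ is injective iff surjective iff bijective, and use it to prove the equivalence: the periodic TVFSS (\ref{TVFSS}) is non-singular $\iff$ $\Phi$ is a bijection. One direction is immediate (a composition of bijections is a bijection). For the other, if $\Phi$ is a bijection then, writing $\Phi=F(N-1,\cdot)\circ G$ with $G$ the composition of the remaining $N-1$ maps, surjectivity of $\Phi$ forces $F(N-1,\cdot)$ surjective, hence bijective, hence $G=F(N-1,\cdot)^{-1}\circ\Phi$ is a bijection; an induction on $N$ then shows every $F(k,\cdot)$ is bijective.

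Next I would establish the monodromy identity $x(mN)=\Phi^{m}\big(x(0)\big)$ for every solution $x$ of (\ref{TVFSS}) and every $m\ge 0$, by induction on $m$: advancing a solution from time $mN$ to time $(m+1)N$ applies the maps $F(mN,\cdot),\dots,F(mN+N-1,\cdot)$, which by $N$-periodicity of $F$ coincide with $F(0,\cdot),\dots,F(N-1,\cdot)$, i.e.\ one application of $\Phi$. In particular the samples $x(0),x(N),x(2N),\dots$ of any solution form an orbit of the shift-invariant system $z(m+1)=\Phi(z(m))$, and $x$ is periodic iff $x(0)=x(mN)$ for some $m\ge 1$, i.e.\ iff $x(0)$ is a periodic point of $\Phi$ (the ``if'' part of this auxiliary claim being proved by propagating the equality forward: from $x(k)=x(k+mN)$ and $N\mid mN$ one gets $x(k+1)=F(k,x(k))=F(k+mN,x(k+mN))=x(k+1+mN)$; the ``only if'' part is just $x(0)=x(NT)$ when $x$ has period $T$).

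With these in hand the two directions are short. If (\ref{TVFSS}) is non-singular then $\Phi$ is a bijection, so by Proposition \ref{prop:NSFSS} (applied to $z\mapsto\Phi(z)$) every point of $\ff^{n}$, in particular $x(0)$, is a periodic point of $\Phi$; by the auxiliary claim $x$ is then periodic. Conversely, if (\ref{TVFSS}) is singular then $\Phi$ is not a bijection, so by Proposition \ref{prop:NSFSS} there is a $z$ whose forward $\Phi$-orbit never returns to it, $\Phi^{m}(z)\ne z$ for all $m\ge 1$; the solution with $x(0)=z$ cannot be periodic, since periodicity with some period $T\ge 1$ would force $z=x(0)=x(NT)=\Phi^{T}(z)$, a contradiction.

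The step I expect to require the most care is the last one, precisely because the minimal period $T$ of a periodic solution of a periodic system need not be a multiple of $N$ — or even commensurable with it — so one cannot directly express $x(T)$ as a power of $\Phi$ applied to $x(0)$; passing to the multiple $NT$, which is automatically divisible by $N$, is what makes the monodromy relation applicable. The only other point needing attention is the finite-set bookkeeping in the equivalence ``non-singular $\iff$ $\Phi$ bijective'', which hinges on injectivity and surjectivity coinciding for self-maps of $\ff^{n}$.
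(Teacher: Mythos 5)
Your proof is correct and follows essentially the same route as the paper: reduce to the shift-invariant monodromy map $\Phi$ and invoke Proposition \ref{prop:NSFSS}. You are in fact more careful than the paper at the two delicate points --- that bijectivity of the composition $\Phi$ on the finite set $\ff^{n}$ forces bijectivity of each factor $F(k,\cdot)$, and that the minimal period $T$ of a periodic solution need not be a multiple of $N$, so one must pass to $NT$ before applying the monodromy relation --- both of which the paper's proof glosses over.
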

 \begin{proof}
 As the map $F(k,x)$ is periodic in $k$ let the period be $N$. Consider the transition map over one period $[0,N-1]$ given by the composition
 \[
 \Phi(N,x)=F(N-1,F(N-2,(\ldots,F(1,F(0,x))\ldots)
 \]
 This is called the \emph{monodromy map} of the system. Since $F(,)$ is non singular, the monodromy map is one to one. Moreover since $F(,)$ is also periodic, the monodromy map defines a dynamical system of the type FSS (\ref{eq:FSS}) as its transition map
 \beq\label{oneperiod}
 y(r+1)=\Phi(N,y(r))
 \eeq
 over $\ff^{n}$. Since the transition map of this FSS is one to one from proposition \ref{prop:NSFSS}, this system has all solutions periodic. Let $x$ be any point in $\ff^{n}$ and $y(0)=x$ then there is a period $L$ of the periodic orbit through this initial point under the dynamics of (\ref{oneperiod} and thus $y(L)=y(0)=x$. Since $\Phi(N,.)$ is a transition map of the original TVFSS we have $x(N)=\Phi(N,x(0))$ and in general
 \[
 x((k+1)N)=\Phi(N,x(kN))
 \]
 Considering $r=kN$ and $y(r)=x(kN)$ the periodic orbit of $y(r)$ of period $L$ gives $x(kN)=x(kN+LN)$ for all $k$. Hence $x(0),\ldots,x(LN)$ is a periodic orbit of the TVFSS (\ref{TVFSS}).
 
 Conversely let all solutions of TVFSS be periodic, then considering the one period transition relation (\ref{oneperiod}) an initial state $y(0)$ is on a unique periodic orbit. Hence $\Phi(N,x)$ is one to one in $x$. Since $\Phi(n,x)$ is the composition of transition maps $F(k,x)$ from $k=0$ to $N-1$ this shows that the each of these composition maps are one to one. Hence due to periodicity of the transition map the TVFSS is nonsingular.  

 \end{proof}
 
Coming to PFSS, the \emph{monodromy matrix} of period $N$, is
\[
\Phi(N) = A(N-1)A(N-2)\cdots A(1)A(0)
\]
Let the state transition map from $k_0$ to $k_1$ be denoted as   
\[
S(k_1,k_0) = A(k_1-1)A(k_1-2)\cdots A(k_0+1)A(k_0)
\]
which gives the transition
\[
x(k_{1})=S(k_{1},k_{0})x(k_{0})
\]
hence $\Phi(N)=S(N,0)$. The periodicity result of TVFSS now gives us,
\begin{proposition}
A PFSS (\ref{PFSS}) of period $N$ has all solutions periodic iff all the system matrices $A(k)$ for $k=0,1,\ldots,N-1$ are non-singular.
\end{proposition}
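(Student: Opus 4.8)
The plan is to obtain this as a specialization of the preceding proposition on periodic TVFSS, since a PFSS is exactly its linear instance. First I would observe that the system (\ref{PFSS}) is a periodic TVFSS (\ref{TVFSS}) with state transition map $F(k,x)=A(k)x$, and that this map is periodic in $k$ of period $N$ precisely because $A(k+N)=A(k)$. By the preceding proposition, all solutions of such a system are periodic if and only if the TVFSS is non-singular, i.e. if and only if for every $k$ the map $x\mapsto A(k)x$ on $\ff^{n}$ is one to one.

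The second step is to translate "one to one" into "non-singular matrix". For the linear map represented by $A(k)$ on the finite-dimensional space $\ff^{n}$, injectivity is equivalent to $\ker A(k)=\{0\}$, hence to $A(k)$ being an invertible matrix; over the finite set $\ff^{n}$ this is also the same as bijectivity, consistent with Proposition \ref{prop:NSFSS}. Finally, since both the matrices $A(k)$ and the composed transition maps $S(k+N,\cdot)$ are $N$-periodic in $k$, it suffices to impose this for $k=0,1,\ldots,N-1$, and the stated equivalence follows.

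It is worth recording the concrete picture this gives through the monodromy matrix, which I would include as a short remark: when every $A(k)$, $k=0,\ldots,N-1$, is non-singular, the monodromy matrix $\Phi(N)=A(N-1)A(N-2)\cdots A(0)$ is a product of invertible matrices and hence invertible, so the one-period system $y(r+1)=\Phi(N)y(r)$ is a non-singular LFSS whose periodic orbits lift (via $r=kN$) to periodic orbits of the PFSS, exactly as in the proof of the TVFSS proposition. Conversely, if all PFSS solutions are periodic then $\Phi(N)$ is injective, and by periodicity of the transition map each composition factor, hence each $A(k)$, must be injective.

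As for the main obstacle: there is essentially none of substance — the statement is a corollary of the general periodic-TVFSS result. The only point requiring a little care is the forward implication, where failure of periodicity must be traced back to a single singular matrix $A(j)$ rather than merely to a singular product $\Phi(N)$; this is precisely what periodicity of the transition map delivers in the TVFSS proposition, and I would simply invoke that argument rather than reproving it.
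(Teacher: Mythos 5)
Your proposal is correct and follows essentially the same route as the paper: specialize the periodic-TVFSS proposition to the linear case, identify non-singularity of the transition maps with invertibility of the matrices $A(k)$, and note that this is equivalent to invertibility of the monodromy matrix $\Phi(N)$ as a product of the $A(k)$. The extra care you take on the forward implication (tracing a singular product back to a singular factor) is a sound elaboration of the same argument, not a different method.
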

\begin{proof}
By the previous general result it follows that all solutions of PFSS are periodic iff $\Phi(N)$ is nonsingular. Since this is product of all the system matrices $A(k)$ the result follows.  
\end{proof}

The above proposition characterizes PFSS all of whose solutions are periodic. Under this simplified condition on structure of solutions, we determine orbit lengths of all orbits.  

\section{Orbit lengths of solutions of non-singular PFSS}
\label{sec:SolutionsPFSS}
Previous section showed that a non-singular PFSS has all solutions periodic. The next problem then is to determine the lengths of orbits of different possible periodic solutions of (\ref{PFSS}) for different initial conditions.

Consider a non-singular PFSS (\ref{PFSS}) over $\fq^n$ and the subspace $\kmA \subset \fq^n$ defined as follows 
\begin{equation}
\label{eq:defKMA}
    \kmA = \bigcap_{\substack{i = 0\\ j > i}}^{N-2} ker(A(i) - A(j))
\end{equation}
This subspace allows some classification of orbit lengths as shown in the following theorem. 

\begin{theorem}
\label{thm:CoprimeT}
Let $N$ be the period of the non-singular PFSS. If $x(0)$ has an orbit which is not completely contained in $\kmA$ and has length $T$ then $\gcd(T,N)\neq 1$. 
\end{theorem}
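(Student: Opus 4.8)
The plan is to argue by contradiction: assume $x(0)$ lies on an orbit of length $T$ with $\gcd(T,N)=1$, and show that this forces the entire orbit into $\kmA$. The key structural fact I would exploit is that the state at time $k$ satisfies $x(k+T)=x(k)$ for all $k$ (periodicity of the orbit of length $T$ dividing into the trajectory), while simultaneously $A(k+N)=A(k)$. So I would first record the two transition identities: $x(k+T) = S(k+T,k)x(k) = x(k)$ and the $N$-periodicity of the matrix sequence. The coprimality $\gcd(T,N)=1$ means that the residues $\{kT \bmod N : k=0,1,\dots,N-1\}$ run through all of $\zz/N\zz$; equivalently, iterating the shift-by-$T$ on the index cycles through every phase modulo $N$. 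This is the arithmetic engine of the proof.

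Next I would translate ``the orbit returns after $T$ steps'' into a statement comparing the applied matrices. Since $x(k)$ is periodic with period $T$ and the coefficient sequence is periodic with period $N$, consider the trajectory sampled at the times where we are about to apply a given matrix. Concretely, at the times $k$ and $k+N$ the same matrix $A(k \bmod N)$ is applied, but the states there, $x(k)$ and $x(k+N)$, need not coincide unless $T \mid N$ — however $x(k)$ and $x(k+T)$ \emph{do} coincide. The aim is to show that at some state on the orbit, two distinct matrices $A(i)$ and $A(j)$ are applied to the \emph{same} vector (namely a point of the orbit) and produce the \emph{same} successor (again because both successors are forced to be the appropriate, equal, points of the orbit); iterating, one deduces $(A(i)-A(j))v = 0$ for every orbit point $v$ and every pair $i \ne j$, i.e. the orbit lies in $\kmA$. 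The coprimality is exactly what lets us line up a point of the orbit against every index phase $0,1,\dots,N-1$: walking $N$ steps along the orbit advances the phase index by $N \equiv 0$, but walking $T$ steps returns to the same point while advancing the phase by $T$, and since $\gcd(T,N)=1$ suitable combinations $aT+bN$ realize phase shift $1$, so a single orbit point is seen ``just before'' applying $A(0)$, $A(1)$, \dots, $A(N-1)$ in turn, with identical images.

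I would make this precise as follows: let $v$ be a point of the orbit, say $v=x(k_0)$. For any phase $p \in \{0,\dots,N-1\}$, pick $m$ with $m \equiv k_0 \pmod T$ and $m \equiv p \pmod N$ (possible by CRT since $\gcd(T,N)=1$); then $x(m)=x(k_0)=v$ because $T\mid(m-k_0)$, and the matrix applied at step $m$ is $A(p)$. Hence the successor of $v$ ``at phase $p$'' is $A(p)v$, but it must equal $x(m+1)=x(k_0+1+(m-k_0))=x(k_0+1)$ (again using $T \mid (m-k_0)$). Therefore $A(p)v = x(k_0+1)$ is independent of $p$, so $A(i)v=A(j)v$ for all $i,j$, i.e. $(A(i)-A(j))v=0$. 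Since $v$ was an arbitrary orbit point and $i,j$ arbitrary phases, the orbit is contained in $\bigcap_{i<j}\ker(A(i)-A(j)) = \kmA$, contradicting the hypothesis. The main obstacle I anticipate is purely bookkeeping: being careful that ``the matrix applied to go from $x(m)$ to $x(m+1)$ is $A(m \bmod N)$'' is used consistently, and that the CRT index $m$ can be chosen nonnegative (take it large, which is fine since everything is periodic). No nonsingularity of the $A(k)$ is actually needed for this direction beyond what guarantees the orbit is well-defined; the hypothesis is used only to know $x(0)$ sits on a genuine periodic orbit of some length $T$.
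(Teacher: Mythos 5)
Your proof is correct and is essentially the paper's argument: both exploit that when $\gcd(T,N)=1$ the multiples of $T$ hit every residue class mod $N$, so each orbit point is acted on by every $A(p)$ with the same forced image, giving $(A(i)-A(j))v=0$ for all orbit points $v$ and hence containment of the orbit in $\kmA$. Your CRT phrasing and the uniform treatment of all orbit points via the choice of $k_0$ is a slightly cleaner packaging of the same idea, but not a different route.
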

\begin{proof}
Let $T$ be the period of $x(0)$. 
\[
x(0) = x(T) = x(2T) = \ldots
\]
and also
\[
x(1) = x(T+1) = x(2T+1) = \ldots
\]
Let $T = m_1N+r_1$. Since $A(k)$ is of period $N$, $A(T) = A(m_1N+r_1) = A(r_1)$
\begin{align*}
x(T+1) &= A(T) x(T) \\ 
&= A(r_1) x(T) \\
&= A(r_1) x(0)
\end{align*}
But $x(T+1) = x(1)$. So, we have 
\begin{align}
&A(0)x(0) = A(r_1)x(0) \nonumber \\
\implies & x(0) \in ker(A(0)-A(r_1))
\label{eq:lemCoprimeT1}
\end{align}
Similarly if $2T = m_2N+r_2$, with same arguments and the fact that $x(T+1) = x(2T+1)$, it follows from equation (\ref{eq:lemCoprimeT1}) that
\begin{align*}
&x(0) \in ker(A(0)-A(r_2))\\
&x(0) \in ker(A(r_1)-A(r_2))
\end{align*}
Let $T$ be coprime to $N$, the reminders of $T,2T,3T,\dots,(N-1)T$ when divided by $N$ span the entire set $\{1,2,\dots,N-1\}$. 
So, we can write similar equations for each $mT$, $(m < N)$. This will give rise to the condition that 
\[
x(0) \in ker(A(i)-A(j))\ \ \forall\ \ i \neq j
\]
This implies that $x(0) \in \kmA$. For each time instant $k$, the state $x(k)$ has a period $T$. Using the same arguments, we can prove that the whole of $x(k) \in \kmA$ when $\mbox{gcd}(T,N) = 1$. This proves that when an orbit $x(k)$ has atleast one point outside of $\kmA$, then $\mbox{gcd}(N,T) \neq 1$.  
\end{proof}

Of simplest structure are orbits which are fixed points $x$ which satisfy $x=x(0)=x(k)$ for all $k>0$. Hence these are orbits of length $1$. Hence the above theorem implies
\begin{corollary}
\label{cor:FixPtKMA}
All fixed points $x$ belong to $\kmA$.
\end{corollary}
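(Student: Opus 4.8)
The plan is to derive the corollary as a direct consequence of Theorem \ref{thm:CoprimeT}. A fixed point $x$ is an orbit of length $T = 1$, and since $\gcd(1, N) = 1$ for every $N$, the contrapositive of the theorem forces such an orbit to be completely contained in $\kmA$. Concretely, I would argue as follows: suppose $x$ is a fixed point, so $x(k) = x$ for all $k \geq 0$ and the orbit length is $T = 1$. If this orbit were not completely contained in $\kmA$, then by Theorem \ref{thm:CoprimeT} we would need $\gcd(T, N) = \gcd(1, N) \neq 1$, which is absurd since $\gcd(1, N) = 1$ always. Hence the orbit $\{x\}$ must lie inside $\kmA$, i.e. $x \in \kmA$.

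As a sanity check on the edge case, I would note that the hypothesis of the theorem requires $N$ to be the period of a non-singular PFSS, so implicitly $N \geq 1$; even when $N = 1$ the statement is vacuous-but-consistent since then $\kmA$ is an intersection over an empty index set and equals all of $\fq^n$, so every point (in particular every fixed point) trivially lies in $\kmA$. For $N \geq 2$ the theorem has genuine content and the argument above applies verbatim.

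There is essentially no obstacle here: the only thing to be careful about is making sure the logical direction is right — the theorem says ``orbit not in $\kmA$ with length $T$'' implies ``$\gcd(T,N) \neq 1$'', and we are using its contrapositive ``$\gcd(T,N) = 1$'' implies ``orbit completely in $\kmA$'', instantiated at $T = 1$. One could alternatively give a self-contained one-line proof not invoking the theorem: if $x = x(0) = x(1) = A(0)x(0)$ and also $x = x(k) = A(k-1)x$, one can directly show $x \in \ker(A(i) - A(j))$ for all $i \neq j$ by applying $A(i)x = x = A(j)x$; but routing through Theorem \ref{thm:CoprimeT} is cleaner and is presumably the intended proof. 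I would therefore present the short contrapositive argument as the proof.
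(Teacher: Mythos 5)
Your argument is correct and is exactly the paper's intended derivation: the text preceding the corollary observes that fixed points are orbits of length $T=1$, and the corollary follows from the contrapositive of Theorem \ref{thm:CoprimeT} since $\gcd(1,N)=1$. Your remarks on the $N=1$ edge case and the alternative direct verification via $A(i)x = x = A(j)x$ are sound but not needed.
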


\begin{corollary}
\label{cor:NprimeGen}
Given a non-singular PFSS with period $N$, if $N$ is prime then any initial condition $x(0)$ which does not lie in $\kmA$ has a period which is a multiple of $N$.
\end{corollary}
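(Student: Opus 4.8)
The plan is to obtain this as an immediate consequence of Theorem \ref{thm:CoprimeT}. First I would note that the hypothesis $x(0)\notin\kmA$ already guarantees that the orbit through $x(0)$ is \emph{not completely contained} in $\kmA$, since $x(0)$ itself is one of the points of that orbit. Because the PFSS is non-singular, the earlier propositions tell us this orbit is periodic, so its length $T$ is well defined. Theorem \ref{thm:CoprimeT} then applies verbatim and yields $\gcd(T,N)\neq 1$.

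The second and final step is to invoke the primality of $N$. Since $N$ is prime, the only positive divisors of $N$ are $1$ and $N$, and $\gcd(T,N)$ must be one of these two values; having just excluded $1$, we are forced to $\gcd(T,N)=N$, which is exactly the assertion $N\mid T$. Hence the period of $x(0)$ is a multiple of $N$, as claimed.

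I do not expect any genuine obstacle here: the entire mathematical content sits in Theorem \ref{thm:CoprimeT}, and the corollary is a packaging of it together with the trivial fact that for a prime modulus a $\gcd$ that is not $1$ is the modulus itself. The only point that deserves an explicit word is the logical move from ``$x(0)\notin\kmA$'' to ``the orbit is not contained in $\kmA$'', i.e.\ reading the theorem's hypothesis in contrapositive form; this is immediate because every state lies on its own orbit. If one preferred a self-contained argument, one could instead re-run the computation in the proof of Theorem \ref{thm:CoprimeT} under the assumption $N\nmid T$ (so $T\bmod N\in\{1,\dots,N-1\}$, which since $N$ is prime is coprime to $N$) to force $x(k)\in\kmA$ for all $k$, contradicting $x(0)\notin\kmA$; but appealing to the theorem as already proved is cleaner.
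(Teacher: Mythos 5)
Your proof is correct and matches the paper's intent: the corollary is stated there as an immediate consequence of Theorem \ref{thm:CoprimeT} (with no separate proof given), and your derivation --- orbit not contained in $\kmA$ because $x(0)\notin\kmA$, hence $\gcd(T,N)\neq 1$, hence $\gcd(T,N)=N$ by primality, hence $N\mid T$ --- is exactly the intended argument. Your explicit remark about passing from ``$x(0)\notin\kmA$'' to ``the orbit is not completely contained in $\kmA$'' is a worthwhile clarification the paper leaves implicit.
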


Subspace $\kmA$ is an intersection of various null spaces defined by matrices $A(k)$. We can thus expect $\kmA$ of small dimension. As a consequence of corollary \ref{cor:NprimeGen}, if $N$ is prime we can expect that apart from a small set of initial conditions, all other initial conditions have a period which is a multiple of $N$. Hence we also have

\begin{corollary}
If the period $N$ is prime and $\kmA=0$ then all orbits of non-singular PFSS have orbits which are multiples of $N$ except for $x(0) = 0$.
\end{corollary}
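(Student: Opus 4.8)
The plan is to derive this directly from Corollary \ref{cor:NprimeGen}, which already carries out the substantive work. First I would observe that when $\kmA = 0$ the subspace $\kmA$ consists of the single state $0$, so the only initial condition lying in $\kmA$ is the zero vector. Consequently every nonzero $x(0) \in \fq^n$ fails to lie in $\kmA$, and Corollary \ref{cor:NprimeGen} (applicable since $N$ is prime) immediately yields that the period of the orbit through such an $x(0)$ is a multiple of $N$.

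Next I would dispose of the excepted point $x(0) = 0$: since each $A(k)$ is linear, $A(k)\cdot 0 = 0$ for every $k$, so $x(k) = 0$ for all $k$ and the zero state is a fixed point, i.e.\ an orbit of length $1$. Because $N$ is prime (hence $N \geq 2$), the length $1$ is not a multiple of $N$, so this point is a genuine exception and the statement is sharp in that respect.

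I do not anticipate any real obstacle here; the result is essentially a repackaging of Theorem \ref{thm:CoprimeT} and Corollary \ref{cor:NprimeGen} together with the trivial observation that $0$ is always a fixed point of a linear PFSS. The one point to be careful about is the logical direction: Theorem \ref{thm:CoprimeT} is phrased in terms of an orbit that is \emph{not completely contained} in $\kmA$, and Corollary \ref{cor:NprimeGen} uses primality to upgrade this to a statement about the single point $x(0) \notin \kmA$. When $\kmA = 0$ these distinctions collapse, since an orbit is completely contained in $\kmA$ precisely when it is the zero orbit, so the quantifier over points of the orbit needs no separate treatment.
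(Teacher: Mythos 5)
Your proposal is correct and follows essentially the same route as the paper, which presents this corollary as an immediate consequence of Corollary \ref{cor:NprimeGen}: when $\kmA=0$ every nonzero $x(0)$ lies outside $\kmA$, so its period is a multiple of the prime $N$, while $x(0)=0$ is the trivial fixed point. Your additional remark reconciling the ``orbit not contained in $\kmA$'' hypothesis of Theorem \ref{thm:CoprimeT} with the pointwise condition $x(0)\notin\kmA$ is a sensible clarification but does not change the argument.
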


So far above results on non-singular PFSS did not employ any Floquet transform. While much of the theory of linear periodic dynamical systems has existence of Floquet transform this was not so even for the discrete time systems studied in \cite{VanD}. Over finite fields existence of Floquet transform is still more special. In the next section we establish further results for these special systems.

\section{Floquet Theory for PFSS}
\label{sec:FloquetPFSS}
Floquet transform has been an important tool to study periodic linear dynmical systems and is a time varying transformation of the state which converts the periodic linear system to a time invariant system. (Floquet transform is also known and is same as the Lyapunov transform in \cite{Gantmacher} for periodic differential equations). The Floquet transform approach for discrete time periodic systems is developed in \cite{VanD}. While the transform exists for continuous periodic systems under very mild conditions, in the discrete time case (with real valued states) the transform exists under certain conditions on the system matrices $A(k)$ (as proved in \cite{VanD}). We develop an analogous result for PFSS which have finite field valued and finite states. The Floquet transform we seek is a periodic non-singular linear transformation $P(k)$ of the state space of the same period $N$ as the PFSS which transforms state to
\[
\tilde{x}(k)=P(k)x(k)
\]
then the dynamics of the transformed state is
\begin{align}\label{FT}
\tilde{x}(k+1)&=P(k+1)x(k+1) \nonumber \\
&=P(k+1)A(k)P(k)^{-1}\tilde{x}(k)
\end{align}
The objective of the Floquet tranform is to find $P(k)$ such that the system matrix 
\begin{align}
\label{FT1}
\tilde{A}(k)=P(k+1)A(k)P(k)^{-1}
\end{align}
of the transformed state is shift invariant $\tilde{A}$. The PFSS then transforms to the LFSS
\[
\tx(k+1)=\tA\tx(k)
\]
Hence the structure of orbits of PFSS can now be studied completely from those of this shift invariant system which has a well developed theory.

\begin{theorem}
\label{thm:FLT1}
Given a non-singular PFSS with period $N$, the Floquet transform exists iff the monodromy matrix has an N-th root.
\end{theorem}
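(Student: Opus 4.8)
The plan is to prove both directions by constructing, respectively, the Floquet transform $P(k)$ from an $N$-th root of the monodromy matrix, and an $N$-th root of the monodromy matrix from a given Floquet transform. The key structural identity is that the state transition over one period starting at time $k$, namely $S(k+N,k)=A(k+N-1)\cdots A(k)$, is conjugate to $\Phi(N)=S(N,0)$; indeed by periodicity of the $A(\cdot)$ and a telescoping argument one gets $S(k+N,k)=S(k,0)\,\Phi(N)\,S(k,0)^{-1}$, so all the ``local monodromy matrices'' share the invariant factors of $\Phi(N)$.

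For the ``only if'' direction, suppose $P(k)$ is a periodic non-singular transform with $\tilde A(k)=P(k+1)A(k)P(k)^{-1}=\tilde A$ constant. Multiplying these relations around one period from $k=0$ to $N-1$ telescopes to $\tilde A^{N}=P(N)\,\Phi(N)\,P(0)^{-1}=P(0)\,\Phi(N)\,P(0)^{-1}$ using $P(N)=P(0)$. Hence $\Phi(N)=P(0)^{-1}\tilde A^{N}P(0)=\bigl(P(0)^{-1}\tilde A\,P(0)\bigr)^{N}$, so $P(0)^{-1}\tilde A\,P(0)$ is an $N$-th root of $\Phi(N)$ over $\ff$.

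For the ``if'' direction, suppose $\Phi(N)=B^{N}$ for some matrix $B$ over $\ff$ (note $B$ is non-singular since $\Phi(N)$ is, by the non-singularity hypothesis). The natural guess is to set $P(k)=B^{k}\,S(k,0)^{-1}$ for $k=0,1,\dots,N-1$ and extend periodically; this is well-defined and non-singular since each $S(k,0)$ is a product of non-singular matrices. One must check two things: that $P$ is genuinely $N$-periodic, i.e. $P(N)=P(0)=I$, and that it does the job. Periodicity is exactly $B^{N}S(N,0)^{-1}=B^{N}\Phi(N)^{-1}=I$, which holds by the choice of $B$. For the conjugation, compute $\tilde A(k)=P(k+1)A(k)P(k)^{-1}=B^{k+1}S(k+1,0)^{-1}A(k)S(k,0)B^{-k}$; since $S(k+1,0)=A(k)S(k,0)$ we get $S(k+1,0)^{-1}A(k)S(k,0)=I$, so $\tilde A(k)=B^{k+1}B^{-k}=B$, constant. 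One subtlety is that the formula $P(k)=B^{k}S(k,0)^{-1}$ was derived for $0\le k\le N-1$ and then extended $N$-periodically, so one should verify the conjugation identity also holds at the wrap-around index $k=N-1$, i.e. that $\tilde A(N-1)=P(N)A(N-1)P(N-1)^{-1}$ still equals $B$; this works precisely because $P(N)=P(0)=I$ matches what the formula $B^{N}S(N,0)^{-1}$ gives.

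The main obstacle is not the algebra, which is routine telescoping, but making sure the roles of ``$N$-th root exists over $\ff$'' are handled honestly: over a finite field a non-singular matrix need not have an $N$-th root, so the theorem is genuinely a non-trivial existence criterion, and the proof must not secretly assume one can always take roots (e.g. via a logarithm, which is unavailable here). Everything else — that $S(k,0)$ is invertible, that $P(k)$ is periodic and non-singular, that the conjugation collapses — follows directly from the definitions and the non-singularity proposition already established.
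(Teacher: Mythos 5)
Your proposal is correct and follows essentially the same route as the paper: telescoping the conjugation relations $\tilde A = P(k+1)A(k)P(k)^{-1}$ around one period to extract $\tilde A^N = P(0)\Phi P(0)^{-1}$ for necessity, and defining $P(k)$ from an $N$-th root $B$ (your closed form $P(k)=B^kS(k,0)^{-1}$ is exactly the solved version of the paper's recursion $P(i)=\tA P(i-1)A(i-1)^{-1}$ with $P(0)=I$) for sufficiency. Your version is in fact slightly more careful than the paper's, since you explicitly verify $N$-periodicity and the wrap-around identity at $k=N-1$, and you handle general $P(0)$ by conjugation rather than invoking ``without loss of generality.''
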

\begin{proof}
Given a PFSS let the Floquet transform be denoted by $P(k)$. If there exists a solution $P(k)$ and $\tA$ for the system of equations (\ref{FT1}), evaluating these equations for each $k$, we get 
\begin{align}
    \begin{array}{rcl}
     \tA  &=& P(1)A(0) \iP(0) \\
     \tA  &=& P(2)A(1) \iP(1) \\
    \ \ &\vdots& \\
     \tA  &=& P(0) A(N-1) \iP(N-1)
    \end{array}
\label{eq:thmFLT1}
\end{align}
Hence multiplying all the equations 
\begin{align}
\tA^N &= P(0) A(N-1)\cdots A(1) A(0) \iP(0) \nonumber \\
&= P(0) \Phi \iP(0)
\label{eq:thmFLT2}
\end{align}
Assuming $P(0)=I$ without loss of generality, a necessary condition for Floquet transform to exist is the existence of an $N$-th root of the monodromy matrix.
\[
\tA^N = \Phi
\]
The transformation $P(k)$ is now solved in terms of $\tA$ as
\[
P(i)=\tA P(i-1) A^{-1}(i-1)
\]
for $i=1,2,\ldots,N-1$ with $P(0)=I$. 

The sufficiency condition stems from the fact that if the $N$-th root of the monodromy matrix exists and since each $A(k)$ is non-singular, we can compute each $P(k)$ using the equations (\ref{eq:thmFLT1}).   
\end{proof}

When Floquet transform exists, it can be assumed without loss of generality that $P(0) = I$ and compute the solutions of other $P(k)$ since any other solution to $P(k)$ with different $P(0)$ will be related through the similarity transform $P(0)$. 

\begin{remark}
A question arises in which field do we consider an $N$-th root $\tA$ of the monodromy matrix? We shall choose to seek an $N$-th root in any extension field of $\ff$. This makes the finite field Floquet theory of PFSS to extend the possibility of the resulting LFSS system to be over more general extended finite field analogous to computing Floquet transform over complex numbers in the real state space systems. 
\end{remark}

\subsection{Extension of PFSS and equivalent LFSS}
From the analysis of the existence of Floquet transform for PFSS in the extension field it is now logical to extend the theory of PFSS to the smallest possible field extension over which all possible $N$-th roots of the monodromy matrix exist. Hence it is necessary to redefine the field of definition of the PFSS and the equivalent LFSS.

\begin{definition}
Consider a non-singular PFSS defined in (\ref{PFSS}) over a field $\fq$ of period $N$ and let $\Phi(N)$ denote its monodromy matrix. If $\ff_{q^m}$ is the smallest extension of $\ff_q$ such that all possible $N$-th roots of $\Phi(N)$ exist over $\ff_{q^{m}}$, then for any one such $N$-the root $\tA$ over $\ff_{q^m}$ and the corresponding solution $P(k)$ of the Floquet transform obtained from equations (\ref{eq:thmFLT1}), the PFSS defined over the extended state space $\ff_{q^{m}}^n$ and defined by the same equation (\ref{PFSS}) is called the \textbf{Extended PFSS} and the LFSS system
\beq\label{eqLFSS}
\tx(k+1)=\tA\tx(k)
\eeq
with the state space extended to $\ff_{q^{m}}^n$ is called the \textbf{Equivalent LFSS} of the extended PFSS. 
\end{definition}

Hereafter whenever a PFSS is considered with existence of a Floquet transform we shall mean the extended PFSS and its equivalent LFSS over the extended state space. This kind of extended PFSS allows a systematic analysis of the orbit structure of the PFSS in a sense similar to how algebraically closed fields allow systematic investigation of various results of roots of polynomials. In systems theory such an extension of the original system is a vast generalization of the notion of complexification of dynamical systems. In real systems (defined over real coefficients and real state space) complexification is not always realistic because of complex coefficients. However any finite state system over any finite field can be synthesized by appropriate digital electronic hardware. Hence this theory of extension of PFSS is realistic in systems theory.  

\subsection{Floquet transform for real systems and example of failure of Floquet transform over finite fields}
Over finite fields $\fq$ a matrix may or may not have an $N$th root even in an extension field. This is the reason that the Floquet theory of PFSS does not follow from that of the discrete time Floquet theory of Van Dooran and Sreedhar \cite{VanD}. For completeness we present below the result of Van Dooran and Sreedhar \cite{VanD} and also present an example of a PFSS for which Floquet transform does not exist even over extension fields.

\begin{theorem}[Van Dooran and Sreedhar (\cite{VanD})]
\label{th:VDandJS}
A Floquet transform exists for a periodic linear system iff the following rank conditions hold.
\[
rank(A_{j+i-1}\dots A_{j+1}A_{j}) = r_i \ \ independent\ of \ j
\]
for $1\leq i \leq n$ and $0 \leq j \leq N-1$ 
\end{theorem}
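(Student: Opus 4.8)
The plan is to reproduce the argument of Van Dooren and Sreedhar in a form that makes clear it is a statement about general periodic linear systems (over an arbitrary field, or at least over $\mathbb{R}$), so that the role of $N$-th roots in our finite field setting can be contrasted with it. The theorem asserts that a Floquet transform $P(k)$ — periodic, non-singular, with $P(k+1)A(k)P(k)^{-1}$ constant — exists if and only if for every $1 \le i \le n$ the rank of the partial product $A_{j+i-1}\cdots A_{j+1}A_{j}$ is some value $r_i$ not depending on $j$. One direction is almost immediate: if a Floquet transform exists with constant transformed matrix $\tilde{A}$, then telescoping the relations $\tilde{A}=P(k+1)A(k)P(k)^{-1}$ gives
\[
\tilde{A}^{i} = P(j+i)\,A_{j+i-1}\cdots A_{j}\,P(j)^{-1},
\]
and since $P(j+i)$ and $P(j)$ are non-singular, $\operatorname{rank}(A_{j+i-1}\cdots A_{j}) = \operatorname{rank}(\tilde{A}^{i})$, which is visibly independent of $j$; this is the same telescoping computation used in the proof of Theorem \ref{thm:FLT1}.

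For the converse, the idea is to build $\tilde{A}$ and the $P(k)$ explicitly from the rank data. First I would choose a ``canonical'' matrix $\tilde{A}$ realising the prescribed rank sequence $r_1 \ge r_2 \ge \cdots$ — for instance a suitable nilpotent-plus-invertible block or a matrix in an appropriate normal form whose powers have exactly ranks $r_i$; the condition $\operatorname{rank}(\tilde{A}^{i}) = r_i$ pins down the sizes of its Jordan blocks at eigenvalue $0$, and the invertible part can be taken arbitrary (e.g. identity on the complement). Then, exactly as in the sufficiency half of Theorem \ref{thm:FLT1}, one attempts to solve recursively
\[
P(0) = I, \qquad P(k) = \tilde{A}\,P(k-1)\,A(k-1)^{-1} \quad (1 \le k \le N-1),
\]
but here the $A(k)$ need not be invertible, so this particular recursion must be replaced by the argument that the rank conditions let one choose, at each step, a non-singular $P(k)$ intertwining the appropriate partial products; the real content is showing that $P(N)$ closes up to equal $P(0)$, i.e. that the forced candidate for $P(N)$ is again $I$ (or can be arranged to be), which is where the full strength of ``$r_i$ independent of $j$'' — not merely for $j=0$ — is consumed.

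The main obstacle is this closing-up / consistency step in the converse: one must verify that the successive changes of basis dictated by the moving kernels and images of the partial products $A_{j+i-1}\cdots A_j$ are globally compatible around the cycle of length $N$, rather than only locally solvable. The clean way to organise this is to track, for each $k$, the flag of subspaces $\operatorname{im}(A_{k-1}\cdots A_{k-i})$ (indices mod $N$) and show the rank hypothesis forces these flags to have the same combinatorial type at every $k$, so that a single linear-algebraic normal form can be transported consistently all the way around; the periodicity $A(k+N)=A(k)$ then makes the construction automatically $N$-periodic. I would present this step carefully and treat the choice of $\tilde{A}$ and the per-step construction of $P(k)$ as routine linear algebra, since the substance of the theorem — and the point relevant to this paper — is precisely that over $\mathbb{R}$ these rank conditions are exactly what is needed, whereas over a finite field the further obstruction of whether $\Phi(N)$ admits an $N$-th root (Theorem \ref{thm:FLT1}) can still fail even when the ranks line up.
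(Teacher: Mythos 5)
First, a framing point: the paper does not prove this statement at all --- it is imported from Van Dooren and Sreedhar \cite{VanD} solely so that the finite-field counterexample that follows it can be contrasted with the real/complex case. So there is no internal proof to compare against, and your proposal is really an attempted reconstruction of the argument of \cite{VanD}. Your necessity direction is correct and is the same telescoping computation as in Theorem \ref{thm:FLT1}: from $\tilde{A}=P(k+1)A(k)P(k)^{-1}$ and the $N$-periodicity of $P$ and of the $A$'s one gets $\tilde{A}^{i}=P(j+i)\,A_{j+i-1}\cdots A_{j}\,P(j)^{-1}$, and non-singularity of the $P$'s makes $\operatorname{rank}(A_{j+i-1}\cdots A_j)=\operatorname{rank}(\tilde{A}^i)$ independent of $j$.

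The sufficiency half, however, contains one concrete error and one substantive gap. The error: you assert that once the ranks $r_i$ pin down the nilpotent Jordan structure of $\tilde{A}$, ``the invertible part can be taken arbitrary (e.g.\ identity on the complement).'' It cannot. Telescoping over one full period and using $P(N)=P(0)$ forces $\tilde{A}^{N}=P(0)\,\Phi(N)\,P(0)^{-1}$, so $\tilde{A}$ must be conjugate to an $N$-th root of the monodromy matrix; in particular its invertible part is forced, up to similarity and choice of root, to be an $N$-th root of the invertible core of $\Phi(N)$. Over $\mathbb{C}$ such a root always exists, which is precisely why the rank conditions are the only obstruction there --- and precisely what fails in the paper's $\ff_2$ example, where both $A(k)$ are invertible so the rank conditions hold trivially, yet $\Phi$ has no square root in characteristic $2$. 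Taking the identity on the invertible part would in general yield $\tilde{A}^{N}\not\sim\Phi(N)$, and the recursion for the $P(k)$ could not close up. The gap: the closing-up / consistency argument that you yourself identify as ``the real content'' --- showing that the flags $\operatorname{im}(A_{k-1}\cdots A_{k-i})$ and the corresponding kernels can be conjugated compatibly all the way around the cycle of length $N$ --- is only described, not carried out. That argument is essentially the entire substance of \cite{VanD}, so as written the converse is a plan rather than a proof; the forward direction and the contrast with Theorem \ref{thm:FLT1} are the only parts that are actually established.
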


In this theorem the system matrices $A(k)$ are real matrices and the state space is a real cartesian space. The following PFSS system over a finite field however serves as a counter example. Consider a PFSS with period $N = 2$ over $\ft^2$. The $A(k)$ matrices are given below as
\begin{align*}
        A(0) = \begin{bmatrix} 1&1 \\ 1 & 0\end{bmatrix} &\hspace{1cm}
        A(1) = \begin{bmatrix} 1&0\\1 & 1\end{bmatrix}
\end{align*}
which satisfies the rank condition of theorem \ref{th:VDandJS}. The monodromy matrix is computed as 
\[
\Phi  = A(1)A(0) = \begin{bmatrix} 1&1 \\ 0 & 1\end{bmatrix}
\]
From theorem \ref{th:VDandJS}, the above system should have a Floquet transform. But the monodromy matrix $\Phi$ does not have a square root over any field of characteristic $2$.

\subsection{N-th root of a matrix}
Finding the roots of a matrix over finite field is a computationally hard problem. Over real field matrix roots are well known and always exist for nonsingular matrices \cite{Gantmacher}. Over finite fields \cite{Otero} gives necessary and sufficient conditions for the N-th root to exist in the same field. The matrix $N$-th root problem over finite fields is formally stated as follows.
\begin{problem}
    Given a matrix $A \in \ff_q^{n\times n}$ and a non-negative integer $N$, does there exist a matrix $B$ in $\ff_{q^m}^{n\times n}$ for some $m \geq 0$ such that 
    \[
    B^N = A
    \]
\end{problem}
This problem is slightly different from that in \cite{Otero} because we look for existence of roots of the matrix over any extension of the field over which the matrix $A$ is defined. 

From a computational point of view this problem is equivalent to solving a system of $n^2$ polynomial equations in $n^2$ unknowns over a finite field which is of NP-hard class and its further analysis is beyond the scope of this paper. For this paper we assume that the $N$-th root of the monodromy matrix (if it exists) can be computed in feasible time. Once the equivalent LFSS is constructed, we can refine the theory developed in section \ref{sec:SolutionsPFSS} using the additional information gained through the equivalent LFSS.


\section{Solutions of non-singular PFSS with Floquet transform}
\label{sec:FloquetSolutions}
In section \ref{sec:SolutionsPFSS} we partitioned the state space into two sets, a subspace $\kmA$ and its complement $\kmA^{c}$ in the state space $\ff^{n}$ and discussed about the orbit lengths of initial conditions based on these sets. Under the assumption that $N$ is prime, the corollary \ref{cor:NprimeGen} proves that any initial condition $x(0)$ not inside $\kmA$ has an orbit length which is a multiple of $N$. In this section, we further assume that the PFSS allows a Floquet transform and there exists an equivalent LFSS. With this assumption we can explicitly compute the orbit length of the PFSS. First we start with an observation between the non-singular PFSS and the equivalent LFSS.
\begin{lemma}
If the PFSS is non-singular, then the equivalent LFSS is also non-singular.
\end{lemma}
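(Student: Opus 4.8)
The plan is to reduce the claim to the elementary fact that the monodromy matrix of a non-singular PFSS is invertible, and then transfer this to $\tA$ either through its defining property as an $N$-th root or through the Floquet relation. First I would recall that, by the proposition characterizing non-singular PFSS, the system matrices $A(0),A(1),\ldots,A(N-1)$ are all non-singular, and hence the monodromy matrix
\[
\Phi = A(N-1)A(N-2)\cdots A(1)A(0)
\]
is non-singular as well, being a product of non-singular matrices; equivalently $\det \Phi = \prod_{k=0}^{N-1}\det A(k) \neq 0$ in $\fq$ (and a fortiori in the extension field $\ff_{q^m}$).

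Next, I would invoke Theorem \ref{thm:FLT1} and the definition of the equivalent LFSS: the matrix $\tA$ is by construction an $N$-th root of the monodromy matrix over $\ff_{q^m}$, i.e. $\tA^N = \Phi$. Taking determinants gives $(\det \tA)^N = \det \Phi \neq 0$, so $\det \tA \neq 0$ and therefore $\tA$ is invertible over $\ff_{q^m}$. Consequently the linear map represented by $\tA$ has trivial kernel, which by the remark following Proposition \ref{prop:NSFSS} is exactly the statement that the LFSS $\tx(k+1)=\tA\tx(k)$ is non-singular.

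As an alternative route (worth mentioning for robustness), one can instead use the Floquet relation $\tA = P(1)A(0)P(0)^{-1}$ from equations (\ref{eq:thmFLT1}): since $P(0)$ and $P(1)$ are non-singular by the definition of the Floquet transform and $A(0)$ is non-singular by hypothesis, $\tA$ is a product of three invertible matrices and hence invertible. I do not anticipate any real obstacle here: the content of the lemma is essentially bookkeeping, and the only thing to be careful about is that non-singularity must be asserted over the extension field $\ff_{q^m}$ where $\tA$ lives, which is immediate since the determinant computation is valid over any field containing $\fq$.
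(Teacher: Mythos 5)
Your proposal is correct and follows essentially the same route as the paper's proof: non-singularity of the PFSS gives non-singularity of the monodromy matrix $\Phi$, and since $\tA^N=\Phi$ the $N$-th root $\tA$ must also be non-singular. Your version just spells out the determinant argument (and an alternative via the Floquet relation) that the paper leaves implicit.
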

\begin{proof}
If the PFSS is non-singular, then the monodromy matrix is $\Phi$ is also non-singular. This means, the $N$-th root of the monodromy matrix is also non-singular.   
\end{proof}
The above lemma guarantees that for a non-singular PFSS, all solutions of the equivalent LFSS are also periodic orbits. Hence the structure of solutions of PFSS and the equivalent LFSS are same when the PFSS is non-singular. Given a PFSS with Floquet transform and an initial condition $x(0)$ of the PFSS, define an \emph{equivalent initial condition} of the LFSS as
\beq
\tx(0) = P(0) x(0)
\eeq
The period for any initial condition of the equivalent LFSS can be computed using the linear theory which is well known from \cite{Gill}
\begin{theorem}
\label{thm:SolnFloqGenN}
Given a non-singular PFSS with period $N$ and an initial condition $x(0)$. If the equivalent initial condition $\tx(0)$ has a period $T$ under the dynamics of equivalent LFSS, then the initial condition $x(0)$ of the PFSS has a period which divides $\mbox{lcm}(T,N)$
\end{theorem}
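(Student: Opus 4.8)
The plan is to push the orbit of $x(0)$ through the Floquet transform, do the periodicity bookkeeping inside the equivalent LFSS, and then pull back. First I would write the PFSS solution in closed form via the transform: since $\tx(k) = P(k)x(k)$ and the transformed dynamics is the LFSS $\tx(k+1) = \tA\tx(k)$, induction gives $\tx(k) = \tA^k\tx(0)$ and hence
\[
x(k) = P(k)^{-1}\tA^k\tx(0).
\]
Each $P(k)$ is invertible because the Floquet transform is a non-singular linear map of the (extended) state space, and crucially $P(k)$ has the same period $N$ as the PFSS, i.e. $P(k+N) = P(k)$; both facts are exactly what the construction of $P(k)$ in Section \ref{sec:FloquetPFSS} supplies. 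Note also that $\tx(0) = P(0)x(0)$ by definition of the equivalent initial condition.

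Next, I would set $L = \mbox{lcm}(T,N)$ and evaluate $x(L)$. Since $N \mid L$, periodicity of $P$ gives $P(L) = P(0)$; since $\tx(0)$ has period $T$ under the equivalent LFSS and $T \mid L$, we get $\tA^L\tx(0) = \tx(0)$. Substituting into the closed form,
\[
x(L) = P(L)^{-1}\tA^L\tx(0) = P(0)^{-1}\tx(0) = P(0)^{-1}P(0)x(0) = x(0).
\]
To upgrade this single return at step $L$ to genuine $L$-periodicity of the whole orbit, I would run the same computation shifted: for every $k$,
\[
x(k+L) = P(k+L)^{-1}\tA^{k+L}\tx(0) = P(k)^{-1}\tA^k\bigl(\tA^L\tx(0)\bigr) = P(k)^{-1}\tA^k\tx(0) = x(k),
\]
again using $P(k+L) = P(k)$ and $\tA^L\tx(0) = \tx(0)$. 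Hence $L$ is a period of the orbit through $x(0)$, so its minimal orbit length divides $\mbox{lcm}(T,N)$.

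I do not expect a real obstacle here: the statement is essentially the observation that, after the time-varying change of coordinates, a return time of the LFSS orbit combined with one full period of the coordinate change $P(k)$ forces a return of the PFSS orbit. The only point requiring a little care is to invoke the period-$N$ property of $P(k)$ rather than a normalization such as $P(0) = I$; writing the argument as above makes it valid whether or not $P(0) = I$, and also makes transparent why $\mbox{lcm}(T,N)$ cannot in general be shrunk to $T$ alone, since the coordinate frame only realigns with itself after multiples of $N$ steps.
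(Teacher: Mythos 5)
Your proposal is correct and follows essentially the same route as the paper: both arguments combine $\tA^{L}\tx(0)=\tx(0)$ for $L=\mathrm{lcm}(T,N)$ (since $T\mid L$) with the $N$-periodicity of $P(k)$ (since $N\mid L$) to conclude $x(k+L)=x(k)$ for all $k$, hence the period of $x(0)$ divides $L$. Your version is slightly more explicit in writing the closed form $x(k)=P(k)^{-1}\tA^{k}\tx(0)$ and in noting that the argument does not need the normalization $P(0)=I$, but the underlying idea is identical.
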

\begin{proof}
Given that the initial condition $\tx(0)$ of the equivalent LFSS has a period $T$
\[
\tx(T) = \tx(0)
\]
Since $T$ divides $\mbox{lcm}(T,N)$, let $\mbox{lcm}(T,N) := lT$ for some $l$, 
\begin{align*}
\begin{array}{rrll}
    &\tx(k+lT) &= \tx(k) & \forall \ k \\
    \implies& P(k+lT) x(k+lT) &= P(k) x(k)& \forall \ k \\
    \implies& x(k+lT) &= x(k) & \forall \ k 
\end{array}
\end{align*}
The last equation comes from the fact that $P$ is periodic of period $N$ and $N$ divides $\mbox{lcm}(T,N)$. Also, since $x(k+lT) = x(k)$ for all $k$, the period of $x(0)$ divides $\mbox{lcm}(T,N)$  
\end{proof}
The above theorem provides insight on the possible orbit lengths for the initial condition $x(0)$ of the PFSS through the equivalent initial condition $\tx(0)$.
\subsection{PFSS with prime period}
Further results in this paper assume that the PFSS has a prime period. We have the following results which relates the solutions for a specific initial condition of the PFSS and its equivalent LFSS.
\begin{lemma}
Given a non-singular PFSS with a prime period $N$, if an initial condition $x(0)$ is a fixed point, then the equivalent initial condition $\tx(0)$ is either a fixed point or has an orbit of length $N$
\end{lemma}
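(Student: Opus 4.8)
The plan is to exploit the defining relation $\tx(k)=P(k)x(k)$ of the Floquet transform together with the fact, established in Theorem~\ref{thm:FLT1}, that $P(k)$ is periodic of period $N$ (the normalization $P(0)=I$ is convenient but, as will be seen below, not essential for this lemma).

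First I would unpack what it means for $x(0)$ to be a fixed point of the PFSS: the orbit through $x(0)$ consists of the single point $x(0)$, i.e. $x(k)=x(0)$ for every $k\ge 0$ (equivalently $A(k)x(0)=x(0)$ for all $k$). In particular $x(N)=x(0)$.

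Next I would evaluate the Floquet relation and the equivalent-LFSS recursion at $k=N$. On one hand $\tx(N)=P(N)x(N)=P(0)x(0)=\tx(0)$, using $P(N)=P(0)$ (period $N$) and $x(N)=x(0)$. On the other hand $\tx(N)=\tA^{N}\tx(0)$ from the recursion $\tx(k+1)=\tA\tx(k)$. Hence $\tA^{N}\tx(0)=\tx(0)$, so the orbit length $T$ of $\tx(0)$ under the equivalent LFSS divides $N$: the set $\{m\ge 0:\tA^{m}\tx(0)=\tx(0)\}$ is the non-negative part of a subgroup of $\zz$, so every such exponent is a multiple of the least one, $T$, and $\tA^{N}\tx(0)=\tx(0)$ then forces $T\mid N$. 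Since $N$ is prime, $T\in\{1,N\}$: either $\tx(0)$ is a fixed point of the LFSS or it has an orbit of length exactly $N$, which is the assertion.

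I do not anticipate a genuine obstacle: this is essentially a two-line computation once the two inputs are in place, namely that ``$x(0)$ a fixed point'' is the same as ``$x(k)\equiv x(0)$'' (immediate from the PFSS recursion) and that $P(k)$ is $N$-periodic (from Theorem~\ref{thm:FLT1}); the divisibility conclusion is the standard cyclic-order fact. The only point deserving a little care is to quote the periodicity of $P$ rather than the (here unneeded) normalization $P(0)=I$, so that the statement holds for any admissible Floquet transform and hence for the equivalent LFSS exactly as it was defined. One could also view this as a mild converse complement to Theorem~\ref{thm:SolnFloqGenN}, but the direct argument above is the cleanest route.
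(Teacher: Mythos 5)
Your argument is correct and follows essentially the same route as the paper's own proof: both reduce to $x(N)=x(0)$, use the $N$-periodicity of $P$ to conclude $\tx(N)=\tx(0)$, and then invoke primality of $N$ to force the LFSS orbit length to be $1$ or $N$. Your additional remarks (the explicit divisibility argument and the observation that only periodicity of $P$, not the normalization $P(0)=I$, is needed) are correct refinements but do not change the substance.
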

\begin{proof}
Since $x(0)$ is a fixed point, $x(i) = x(0)\ \forall\ i \geq 0$. In particular, 
\begin{align*}
\begin{array}{rrcl}
    &x(N) &=& x(0) \\ 
    \implies & \iP(N) \tx(N) &=& \iP(0) \tx(0) \\
    \implies &\tx(N) &=& \tx(0)
\end{array}
\end{align*}
So, $\tx(0)$ has a period which divides $N$. Since $N$ is prime, the period of $\tx(0)$ is either $1$ or $N$  
\end{proof}

\begin{lemma}
\label{lem:ELFSSfixedpoint}
 Given a non-singular PFSS with prime period. If an initial condition $\tx(0)$ of the equivalent LFSS is a fixed point, then $x(0) = \iP(0)\tx(0)$ is either a fixed point or has an orbit of length $N$ under the dynamics of the PFSS. 
 \end{lemma}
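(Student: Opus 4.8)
The plan is to argue in the reverse direction to the previous lemma, using the relation $\tx(k) = P(k) x(k)$ together with periodicity of $P(k)$. Suppose $\tx(0)$ is a fixed point of the equivalent LFSS, so $\tx(i) = \tx(0)$ for all $i \ge 0$. First I would translate this into a statement about $x(k)$ by applying $\iP(k)$: since $x(k) = \iP(k)\tx(k) = \iP(k)\tx(0)$, the state of the PFSS at time $k$ is entirely determined by the (fixed) vector $\tx(0)$ and the Floquet matrix $P(k)$. In particular, because $P$ has period $N$, we get $\iP(k+N) = \iP(k)$ and hence $x(k+N) = \iP(k+N)\tx(0) = \iP(k)\tx(0) = x(k)$ for every $k$. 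This shows that $x(0)$ has a period dividing $N$.

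Now the conclusion follows from primality of $N$ exactly as in the preceding lemma: the period of $x(0)$ is a divisor of the prime $N$, hence it is either $1$ (a fixed point) or $N$ (an orbit of length $N$). I would then, if desired, note the consistency check that both cases genuinely occur — $x(0)$ is a fixed point precisely when $\iP(1)\tx(0) = \iP(0)\tx(0)$, i.e. when $(\iP(1) - \iP(0))\tx(0) = 0$, and otherwise the two values $x(0)$ and $x(1)$ already differ so the orbit cannot have length $1$.

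There is essentially no hard step here; the argument is a one-line consequence of $x(k) = \iP(k)\tx(0)$ and the $N$-periodicity of $P$. The only point requiring a small amount of care is making sure the Floquet relation is invoked correctly: we need $\tx(k) = P(k)x(k)$ to hold for all $k \ge 0$ along the actual trajectory of the PFSS started at $x(0) = \iP(0)\tx(0)$, which is exactly the defining property of the Floquet transform established in Theorem \ref{thm:FLT1}. Given that, the divisibility of the period by $N$ is immediate, and primality of $N$ closes the argument.
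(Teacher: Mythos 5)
Your argument is correct and is essentially the paper's own proof: both derive $x(k+N)=x(k)$ from the $N$-periodicity of $P$ together with $\tx(k)=\tx(0)$, and then invoke primality of $N$ to conclude the period is $1$ or $N$. The extra consistency check at the end is fine but not needed.
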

\begin{proof}
Given that the initial condition $\tx(0)$ of the equivalent LFSS is a fixed point. So, $\tx(i) = \tx(0)$ for all $i \geq 1$. In particular,
\begin{align*}
\begin{array}{rrcl}
    &\tx(k+N) &=& \tx(k) \ \ \forall\ k\\ 
    \implies & P(k+N) x(k+N) &=& P(k) x(k) \ \ \forall\ k \\
    \implies & x(k+N) &=& x(k) \ \ \forall\ k
\end{array}
\end{align*}
This means, the initial condition $x(0)$ has a period which divides $N$ and since $N$ is prime, we get that $x(0)$ has a period which is either 1 or N.  
\end{proof}
\begin{lemma}
\label{lem:OrbitMult}
Consider a non-singular PFSS of prime period $N$. Let $x(0)$ not belonging to $\kmA$ be an initial condition of the PFSS and $\tx(0)$ be the equivalent initial condition of the equivalent LFSS. Then the orbit length of $\tx(0)$ under the dynamics of the equivalent LFSS is a divisor of the orbit length of $x(0)$ under the dynamics of the PFSS.
\end{lemma}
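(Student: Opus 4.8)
The plan is to turn a full return time of $x(0)$ under the PFSS into a return time of $\tx(0)$ under $\tA$, using the fact that the Floquet transform $P(k)$ is itself periodic of period $N$. First I would fix $T$ to be the orbit length of $x(0)$ under the PFSS and $\tilde{T}$ the orbit length of $\tx(0)$ under the equivalent LFSS, and recall from the construction of the Floquet transform that along the matched trajectories one has $\tx(k) = P(k)x(k)$ for all $k \geq 0$: indeed, with the normalization $P(0) = I$, the sequence $P(k)x(k)$ satisfies $\tx(k+1) = \tA\,\tx(k)$ by (\ref{FT})--(\ref{FT1}), so it is exactly the LFSS orbit of $\tx(0) = P(0)x(0)$, and in particular $\tx(k) = \tA^k \tx(0)$.

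The key step is to invoke Corollary \ref{cor:NprimeGen}: since $N$ is prime and $x(0) \notin \kmA$, the period $T$ must be a multiple of $N$. Consequently $P(T) = P(0) = I$ because $P$ has period $N$ and $N \mid T$. Evaluating the transform identity at time $T$ then gives $\tx(T) = P(T)x(T) = x(T) = x(0) = \tx(0)$, i.e. $\tA^{T}\tx(0) = \tx(0)$.

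Finally, since $\tx(0)$ lies on a periodic orbit of minimal length $\tilde{T}$ under $\tA$ (the equivalent LFSS is non-singular by the earlier lemma), the set of return times of $\tx(0)$ is precisely the set of multiples of $\tilde{T}$; from $\tA^{T}\tx(0) = \tx(0)$ we conclude $\tilde{T} \mid T$, which is the assertion. As a consistency check, this is compatible with Theorem \ref{thm:SolnFloqGenN}, which bounds $T$ from above by $\mathrm{lcm}(\tilde{T},N)$.

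I do not expect a genuine obstacle in this argument; the only delicate points are bookkeeping ones. One is making sure the trajectory-level identity $\tx(k) = P(k)x(k)$ is stated for the correctly matched initial conditions, which is immediate by induction. The other is that the hypothesis $x(0) \notin \kmA$ is used in an essential way, not cosmetically: it is exactly what licenses Corollary \ref{cor:NprimeGen} and hence $N \mid T$, and without $P(T) = P(0)$ the passage $\tx(T) = x(T) = x(0) = \tx(0)$ fails. Primality of $N$ likewise enters only through that corollary.
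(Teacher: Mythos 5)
Your proposal is correct and follows essentially the same route as the paper's proof: invoke Corollary \ref{cor:NprimeGen} to get $N \mid T$, use periodicity of $P$ to conclude $P(T)=P(0)$, and translate $x(T)=x(0)$ into $\tx(T)=\tx(0)$ so that the minimal period of $\tx(0)$ divides $T$. No further comment is needed.
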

\begin{proof}
Given an initial condition $x(0)$ of the non-singular PFSS, let the orbit of $x(0)$ be of length $l$. Since $x(0)\notin \kmA$, from \ref{cor:NprimeGen}, we know that $l = m_1 N$ for some $m_1 \geq 1$. So, 
\begin{align*}
\begin{array}{rrcl}
    &x(m_1N) &=& x(0) \\
    \implies & \iP(m_1N) \tx(m_1N) &=& \iP(0)\tx(0) \\
    \implies & \iP(0) \tx(m_1N) &=& \iP(0)\tx(0) \\
    \implies & \tx(m_1N) &=& \tx(0)
\end{array}
\end{align*}
If $\tx(0)$ has a period $T$, then from the above equation $T$ divides $l = m_1N$  
\end{proof}

\begin{theorem}
\label{thm:orbitLCM}
Given a non-singular PFSS with a prime period $N$ and let $x(0)$ be an initial condition of PFSS not belonging to $\kmA$. If the initial condition $\tx(0) = P(0) x(0)$ has a orbit of length $T$ under the equivalent LFSS, then the initial condition $x(0)$ has an orbit of length $\mbox{lcm}\,(T,N)$ under the PFSS.
\end{theorem}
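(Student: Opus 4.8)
The plan is to combine the two one-directional divisibility facts already available in the excerpt. From Theorem \ref{thm:SolnFloqGenN} we know that the period of $x(0)$ under the PFSS divides $\mbox{lcm}(T,N)$; from Lemma \ref{lem:OrbitMult} we know that $T$ divides the period of $x(0)$; and from Corollary \ref{cor:NprimeGen} (which applies since $x(0)\notin\kmA$ and $N$ is prime) we know that $N$ divides the period of $x(0)$. So first I would set $L$ to be the orbit length of $x(0)$ under the PFSS and record these three facts: $T\mid L$, $N\mid L$, and $L\mid\mbox{lcm}(T,N)$.

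Next I would argue that $T\mid L$ and $N\mid L$ together force $\mbox{lcm}(T,N)\mid L$, by the universal property of the least common multiple (any common multiple of $T$ and $N$ is a multiple of their lcm). Combined with $L\mid\mbox{lcm}(T,N)$ from the other direction, and the fact that both quantities are positive integers, this gives $L=\mbox{lcm}(T,N)$, which is exactly the claim. So the proof is essentially a two-line antisymmetry argument on the divisibility lattice of positive integers, once the three input divisibilities are in place.

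The only point that needs a little care — and the place I expect the main (mild) obstacle — is making sure the hypotheses of the cited results are genuinely met: Lemma \ref{lem:OrbitMult} and Corollary \ref{cor:NprimeGen} both require $x(0)\notin\kmA$ and $N$ prime, which are exactly our standing assumptions, and Theorem \ref{thm:SolnFloqGenN} requires only non-singularity and the existence of the Floquet transform, which hold for the extended PFSS under discussion. I would also note in passing that Lemma \ref{lem:OrbitMult} as stated gives ``$T$ divides the orbit length of $x(0)$'' directly, so no intermediate step is needed there. Thus the write-up would be: invoke Theorem \ref{thm:SolnFloqGenN} for $L\mid\mbox{lcm}(T,N)$; invoke Lemma \ref{lem:OrbitMult} for $T\mid L$; invoke Corollary \ref{cor:NprimeGen} for $N\mid L$; conclude $\mbox{lcm}(T,N)\mid L$; and finish by antisymmetry of divisibility to get $L=\mbox{lcm}(T,N)$.
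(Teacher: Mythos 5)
Your proof is correct, and it is a genuinely cleaner route than the one in the paper. Both arguments rest on the same three ingredients --- Theorem \ref{thm:SolnFloqGenN} for $L\mid\mbox{lcm}(T,N)$, Lemma \ref{lem:OrbitMult} for $T\mid L$, and Corollary \ref{cor:NprimeGen} for $N\mid L$ --- but the paper then splits into three cases according to whether $T$ is a multiple of $N$, $N$ is a multiple of $T$, or the two are coprime, and handles each separately: Case 1 by a minimality-and-contradiction argument on candidate periods $T_1<T$, Case 2 by invoking the additional Lemma \ref{lem:ELFSSfixedpoint} on fixed points of the equivalent LFSS, and Case 3 by essentially your lcm argument restricted to the coprime situation. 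You instead observe that the universal property of the least common multiple (any common multiple of $T$ and $N$ is a multiple of $\mbox{lcm}(T,N)$) together with antisymmetry of divisibility on positive integers finishes the proof in one stroke, with no case distinction and no appeal to Lemma \ref{lem:ELFSSfixedpoint}. Your version buys brevity and uniformity; the paper's case analysis is longer but makes explicit what the orbit length looks like in each arithmetic configuration of $T$ and $N$ (and, in particular, Case 2 records separately why $x(0)$ cannot be a fixed point when it lies outside $\kmA$). Your check that the hypotheses of the cited results are met --- $x(0)\notin\kmA$ and $N$ prime for Lemma \ref{lem:OrbitMult} and Corollary \ref{cor:NprimeGen}, non-singularity and existence of the Floquet transform for Theorem \ref{thm:SolnFloqGenN} --- is exactly the right place to be careful, and all of them do hold here.
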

\begin{proof}
Since $x(0) \notin \kmA$, from corollary \ref{cor:NprimeGen}, let the orbit length of $x(0)$ under the PFSS be $m_1N$. Let the initial condition $\tx(0)$ have an orbit of length $T$. From lemma \ref{lem:OrbitMult} we know that $T$ divides $m_1N$. Assuming $l = \mbox{lcm}(T,N)$, from \ref{thm:SolnFloqGenN} the period of $x(0)$ of the PFSS divides the lcm(T,N). We now prove that when $N$ is prime and $x(0) \notin \kmA$ the period is actually the $\mbox{lcm}(T,N)$. Since $N$ is prime, we divide it to three cases of $T$ where $T$ is a multiple of $N$, $N$ is a multiple of $T$ and $T$ is coprime to $N$. \\
\textbf{Case 1:} \emph{$T$ is a multiple of N and $\mbox{lcm}(N,T) = T$}  
\begin{align*}
    & \tx(k+T) = \tx(k) \\
    \implies & x(k+T) = x(k) 
\end{align*}
Consider any other $T_1 < T$ and a multiple of $N$. If $x(k+T_1) = x(k)$, then $T_1$ is a period of the equivalent LFSS too. This is because
\begin{align*}
\begin{array}{rrcl}
    &x(k+T_1) &=& x(k) \\
    \implies &\iP(k+T_1) \tx(k+T_1) &=&\iP(k) \tx(k)  \\ 
    \implies& \tx(k+T_1) &=& \tx(k) \ \ (\mbox{as}\ N\  \mbox{divides}\  T_1)
\end{array}
\end{align*}
This is a contradiction that $T$ is a period of $\tx(0)$ in the equivalent LFSS. So, the period of $x(0)$ is $T$ which is also the $\mbox{lcm}(N,T)$.

\textbf{Case 2:} \emph{$N$ is a multiple of $T$ and $\mbox{lcm}(N,T) = N$}\\
Since $N$ is prime, the only multiple of $N$ are $1$ and $N$. $T=N$ is discussed in the previous case. When $T=1$, we know from lemma \ref{lem:ELFSSfixedpoint} that the initial condition of $x(0)$ of the PFSS is either a fixed point or an orbit of length $N$. But if $x(0)$ is a fixed point then $x(0) \in \kmA$ which is a contradiction to the assumption $x(0) \notin \kmA$. So $x(0)$ has a period $N$ which is the $\mbox{lcm}(T,N)$. 

\textbf{Case 3:} \emph{$T$ and $N$ are co-prime and 
$\mbox{lcm(T,N)} = TN$}\\
 Since $T$ and $N$ are co-prime, $\gcd(T,N) = 1$ and $\mbox{lcm}\,(T,N) = TN$. From theorem (\ref{thm:SolnFloqGenN}) we know that the period of $x(0)$ divides the $\mbox{lcm}(T,N) = TN$. Also from corollary \ref{cor:NprimeGen}, we know that the period of $x(0)$ is a multiple of $N$. We know from lemma \ref{lem:OrbitMult}, that period of $x(0)$ is a multiple of $T$. The smallest multiple of both $T$ and $N$ is their lcm which is $TN$ when both are co-prime. 
 
 So, we see that the period of $x(0) \notin \kmA$ under PFSS is the lcm(N,T) where $T$ is the period of $\tx(0)$ in the equivalent LFSS   
 \end{proof}

\begin{corollary}
Given a non-singular PFSS with prime $N$ and $\kmA = \{0\}$. For any non-zero initial condition $x(0)$ of the PFSS, if the initial condition $\tx(0)$ has a period $L$, then the period of $x(0)$ is $\mbox{lcm}\,(N,L)$
\end{corollary}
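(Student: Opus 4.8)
The plan is to apply Theorem \ref{thm:orbitLCM} directly, after disposing of the hypotheses. We are given a non-singular PFSS with prime period $N$ and $\kmA=\{0\}$, and a non-zero initial condition $x(0)$. First I would observe that since $\kmA=\{0\}$ and $x(0)\neq 0$, we have $x(0)\notin\kmA$, so the orbit of $x(0)$ is certainly not contained in $\kmA$ (it contains the point $x(0)$ outside $\kmA$, and more directly $x(0)$ itself is not in $\kmA$). This is exactly the hypothesis needed to invoke the earlier theorem.

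Next I would note that the equivalent initial condition is $\tx(0)=P(0)x(0)$, which matches the notation already fixed in the excerpt, and that by hypothesis $\tx(0)$ has period $L$ under the equivalent LFSS. Then Theorem \ref{thm:orbitLCM} applies verbatim with $T=L$: the orbit length of $x(0)$ under the PFSS is $\mbox{lcm}(L,N)=\mbox{lcm}(N,L)$. That is the entire content of the corollary.

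So the corollary is an immediate specialization, and there is essentially no obstacle — the only thing to be careful about is the logical step that $x(0)\neq 0$ together with $\kmA=\{0\}$ forces $x(0)\notin\kmA$; this is trivial but should be stated so the reader sees why the non-triviality hypothesis of Theorem \ref{thm:orbitLCM} is met. One might also remark, for completeness, that when $x(0)=0$ the orbit is the fixed point $\{0\}$ of length $1$, which is why the non-zero restriction appears in the statement. I would keep the proof to two or three lines.

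\begin{proof}
Since $\kmA=\{0\}$ and $x(0)\neq 0$, we have $x(0)\notin\kmA$. Hence Theorem \ref{thm:orbitLCM} applies with $T=L$, and the orbit length of $x(0)$ under the PFSS equals $\mbox{lcm}(L,N)=\mbox{lcm}(N,L)$.
\end{proof}
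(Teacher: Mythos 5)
Your proof is correct and is exactly the intended argument: the paper states this corollary without proof as an immediate specialization of Theorem \ref{thm:orbitLCM}, and your observation that $\kmA=\{0\}$ together with $x(0)\neq 0$ gives $x(0)\notin\kmA$ is precisely the step that makes the theorem apply with $T=L$. Nothing further is needed.
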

The above result is important from a design perspective. Suppose one need to design a PFSS such that all non-zero initial conditions need to have orbit lengths which are integral multiples of the period $N$, then one simple way is to construct the matrices $A(k)$ such that $\kmA = \{0\}$. 

\subsection{Algorithms}
In this section we summarize above results by algorithms for various computations on a given PFSS. A non-singular PFSS is specified by the one period of its periodic transition matrices $A(0),A(1),\ldots,A(N-1)$ along with an $N$-th root of its monodromy matrix with the extension field in which the $N$-th root is chosen.  We discuss the following algorithms for a PFSS with prime period having a Floquet transform.
\begin{enumerate}
    \item Given an initial condition $x(0)$, compute the orbit length of $x(0)$.
    \item Generate the set of all possible orbit lengths in the original state space and the extended state space.
    \item Given an orbit length of the PFSS, find an initial condition $x(0)$ which initiates an orbit of that length. 
\end{enumerate}

Algorithm \ref{algo1} helps in computing the orbit length of an initial condition $x(0)$ of the PFSS with prime $N$. This is the compilation of all the theory developed till now.
\begin{algorithm}
  \caption{Orbit length for initial condition with prime $N$ with Floquet transform}
\label{algo1}
\begin{algorithmic}[1]
\Procedure{Computing orbit length for an initial condition $x(0)$ of a PFSS }{}
\State $\Phi = A(N-1)A(N-2)\cdots A(0)$
\State $\kmA = \bigcap_{i \neq j} ker(A(i) - A(j))$ 
    \State $\tA = \Phi^{1/N}$,     $\tx(0) = x(0)$
    \State $T$ = Orbit length of $\tx(0)$ under equivalent LFSS 
    \Statex  $\tx(k+1) = \tA \tx(k)$
    \If{$x(0) \notin \kmA$}
        \State Orbit length of $x(0) = \mbox{lcm}(N,T)$
    \Else
        \State Orbit length of $x(0)$ divides lcm(N,T)
    \EndIf


\EndProcedure
\end{algorithmic}
\end{algorithm}

The next one (algorithm \ref{algo2}) focuses on listing down all the orbits of the non-singular PFSS with prime $N$, with the assumption that $\kmA = \{0\}$ using the solutions of equivalent LFSS. Once the equivalent LFSS is computed, all its solutions are obtained using the theory developed by \cite{Gill}. Given a non-singular LFSS, if there are $n_i$ distinct orbits of length $T_i$, then the cycle set $\Sigma$ of the LFSS is defined as follows
\beq \label{eq:CSet}
\Sigma = \{ n_0 [1] + n_1[T_1]  + \cdots + n_r [T_r] \}
\eeq
where $T_1,T_2,\dots,T_r$ are the different orbit lengths which can be achieved by the LFSS. Since it is an LFSS, the origin is always a fixed point. All fixed points add up to $n_0$. Computation of cycle set is through the periods of elementary divisors of the LFSS. 

Once the equivalent LFSS of the PFSS is computed, all the solutions of the LFSS is computed by finding the elementary divisors of the system matrix $\tA$. Since $\kmA =\{0\}$, then all initial conditions other than $x(0) = 0$ has an orbit which is a multiple of $N$. So, if we list out all orbit lengths $T_i$ of the equivalent LFSS, then all orbits of the PFSS with the exception of the fixed point are of length $NT_i$
\begin{algorithm}

 \caption{All orbit lengths with $N$ prime, $\kmA = \{0\}$ with Floquet transform}
\label{algo2}
\begin{algorithmic}[1]
\Procedure{Listing orbits of PFSS with prime $N$ and $\kmA = \{0\}$ }{}
\State $\Phi = A(N-1)A(N-2)\cdots A(0)$
\State $\tA = \Phi^{1/N}$
\State Compute the cycle set $\Sigma$ of the equivalent LFSS
\State $\Sigma = \{ n_0 [1] + n_1 [T_1] + \dots n_r [T_r] \}$
\State $\Sigma_{PFSS} = \{ 1[1] + (n_0-1) [N] + n_1[NT_1] + \dots $
\Statex \hspace{1cm}$+ n_r[NT_r]\}$
\EndProcedure
\end{algorithmic}
\end{algorithm}

The last one (algorithm \ref{algo3}) helps in finding an initial condition $x(0)$, such that $x(0)$ has an orbit of a prescribed length (from the set of achievable orbit lengths). \cite{Gill} in his work (Chapter 5, section 15) has solved the equivalent problem of finding the initial condition which achieves a prescribed orbit lengths for a non-singular LFSS. The same problem can be posed for the PFSS also and we call upon the algorithm proposed by Gill to solve the initial condition problem for the PFSS. For a non-singular PFSS with prime $N$ and $\kmA = \{0\}$, all orbits other than the fixed point are a multiple of $N$. Also, the orbit length for the same initial condition under the equivalent LFSS and the PFSS differ by a factor $N$. Given any orbit length $N_1$ for which we need to find the initial condition for the PFSS, we first divide $N_1$ by $N$ and then find the initial condition $\tx(0)$ for the orbit length $N_1/N$ in the equivalent LFSS using the linear theory and this initial condition gives an orbit of length $N_1$ under the dynamics of the PFSS
\begin{algorithm}

 \caption{Initial condition for a prescribed orbit length}
\label{algo3}
\begin{algorithmic}[1]
\Procedure{Computation of initial condition for a given orbit length $N_1$}{}
\State $T =  N_1/N$
\State Compute equivalent LFSS : $\tx(k+1) = \tA \tx(k)$
\State Compute the initial condition $\tx(0)$ for the 
\Statex \hspace{-1.1cm} equivalent LFSS such that $\tx(0)$ has \Statex \hspace{-1cm} orbit length $T$
\State $\tx(0)$ has orbit length $N_1$ for the PFSS
\EndProcedure
\end{algorithmic}
\end{algorithm}

Algorithm 3 reduces the problem of computing initial condition of the PFSS of a prescribed orbit length to the problem of computing initial condition of the LFSS of an appropriate orbit length determined by the period $N$.

\section{Numerical Examples}
\label{sec:NEPFSRs}
In this section, we discuss two constructions of PFSRs (Fibonacci type and Galois type) and compute the orbit lengths using the Floquet theory developed for PFSRs. 
\subsection{Fibonacci type PFSR over $\ff_2$}
 Consider a Fibonacci type PFSR as in figure \ref{fig:3bitPFSR}.
 \begin{figure}[ht]
\centering
\begin{tikzpicture}[scale = 0.85]
    \draw[step=1cm] (0,0) grid (3,-1);
    \node at (0.5,-0.5) {$x_2$};
    \node at (1.5,-0.5) {$x_1$};
    \node at (2.5,-0.5) {$x_0$};
    
    \draw[step=1cm] (0,-2) grid (3,-3);
    \node at (0.5,-2.5) {$y_2$};
    \node at (1.5,-2.5) {$y_1$};
    \node at (2.5,-2.5) {$y_0$};

    \draw[step=1cm] (0,-4) grid (3,-5);
    \node at (0.5,-4.5) {$\beta_2$};
    \node at (1.5,-4.5) {$\beta_1$};
    \node at (2.5,-4.5) {$\beta_0$};

    \draw[->] (0.25,-3) to (0.25,-4);
    \draw[->] (1.25,-3) to (1.25,-4);
    \draw[->] (2.25,-3) to (2.25,-4);
    \draw[->] (0.25,-5) to (0.25,-5.3);
    \draw[->] (1.25,-5) to (1.25,-5.3);
    \draw[-] (2.25,-5) to (2.25,-5.5);
    \draw[->] (2.25,-5.5) to (1.45,-5.5);
    \node at (1.25,-5.5) {$\bigoplus$};
    \draw[->] (1.05,-5.5) to (0.45,-5.5);
    \node at (0.25,-5.5) {$\bigoplus$};
    \draw[-] (0.05,-5.5) to (-1,-5.5);
    \draw[-] (-1,-5.5) to (-1,-2.5);
    \draw[->] (-1,-2.5) to (0,-2.5);

    \draw[->] (0.75,-1) to (0.75,-2);
    \draw[->] (1.75,-1) to (1.75,-2);
    \draw[->] (2.75,-1) to (2.75,-2);
    \draw[->] (0.75,-3) to (0.75,-3.3);
    \draw[->] (1.75,-3) to (1.75,-3.3);
    \draw[-] (2.75,-3) to (2.75,-3.5);
    \draw[->] (2.75,-3.5) to (1.95,-3.5);
    \node at (1.75,-3.5) {$\bigoplus$};
    \draw[->] (1.55,-3.5) to (0.95,-3.5);
    \node at (0.75,-3.5) {$\bigoplus$};
    \draw[-] (0.55,-3.5) to (-0.5,-3.5);
    \draw[-] (-0.5,-3.5) to (-0.5,-0.5);
    \draw[->] (-0.5,-0.5) to (0,-0.5);
    
    
     \node at (5,-0.5) {Slave FSR};
     \node at (5,-2.5) {Master FSR};
 
 
\end{tikzpicture}
\caption{Fibonacci $3$-bit Periodic FSR}
\label{fig:3bitPFSR}
\end{figure}
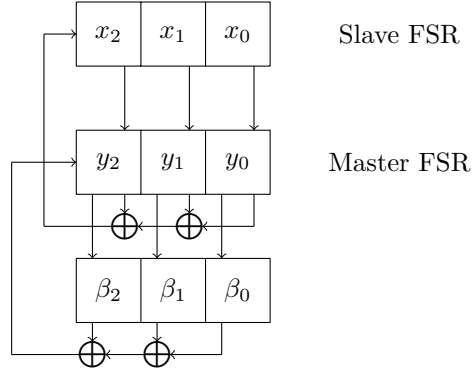
 For the PFSR to be non-singular, the coefficient $y_0(k)$ for the bit $x_0(k)$ is non-zero always. So, the master FSR can be considered as a 2-bit FSR and the slave a 3-bit FSR. Consider the following 2 state FSR for the master configuration. 
 \[
\big[ y_1(k+1),y_2(k+1)\big]^T = \big[ y_2(k), y_1(k)+y_2(k) \big]^T
 \]
 and the following 3 state FSR for the slave configuration.
 \begin{align*}
 \big[&x_0(k+1),x_1(k+1), x_2(k+1) \big]^T = \\
 &\big[ x_1(k),x_2(k), x_0(k) + y_1(k)x_1(k)+y_2(k)x_1(k) \big]^T
 \end{align*}
 gives the following representation for the $A(k)$ matrices.
 \[
 A(k) = \begin{bmatrix} 0 & 1 & 0 \\ 0 & 0 & 1 \\ 1 & y_1(k)  & y_2(k) \end{bmatrix} 
 \]
 Now for an initial condition $[y_1(0),y_2(0)]^T = \big[1,1\big]^T$ of the master, the state evolution is given by 
 \[
 \begin{bmatrix} 1 \\ 1 \end{bmatrix} \to \begin{bmatrix} 1 \\ 0 \end{bmatrix} \to \begin{bmatrix} 0 \\ 1 \end{bmatrix} \to \begin{bmatrix} 1 \\ 1 \end{bmatrix} \to \cdots
 \]
 So, the $A(k)$ matrices are periodic of period $3$ which makes $A(k)$ matrices periodic of period $3$. The matrices are given as follows:
 \begin{align*}
  A(0) = \begin{bmatrix} 0 & 1 & 0 \\ 0 & 0 & 1 \\ 1 & 1 & 1 \end{bmatrix} \hspace{0.2in} A(1) = \begin{bmatrix} 0 & 1 & 0 \\ 0 & 0 & 1 \\ 1 & 1 & 0 \end{bmatrix} \hspace{0.2in}
  A(2) = \begin{bmatrix} 0 & 1 & 0 \\ 0 & 0 & 1 \\ 1 & 0 & 1 \end{bmatrix} 
 \end{align*}
 Note that the matrices are non-singular and the monodromy matrix $\Phi$ is given by
 \[
 \Phi = A(2)A(1)A(0) = \begin{bmatrix} 1&1&1 \\ 0&1&1 \\ 0&1&0\end{bmatrix}
 \]
 Also, the subspace $\kmA$ has the only one non-zero vector
 \[
 \kmA = \mbox{span}\{\begin{bmatrix} 1 \\0 \\ 0 \end{bmatrix}\}
 \]
 So any nonzero initial condition not equal to this vector will have an orbit length which is a multiple of $3$. For Floquet transform to exist, the monodromy matrix should have a cube root. We see that in the extension field $\ff_{2^2}$, $\Phi$ is diagonalizable and 
 \[
 \Phi \sim \begin{bmatrix} 1 & 0 & 0 \\ 0 & \theta & 0 \\ 0 & 0 & \theta^2 \end{bmatrix}
 \]
 where $\theta^2+\theta+1 = 0$. 
 \begin{remark}
 If a matrix $A$ over $\ff$ is diagonalizable over any extension field $\mathbb{K}$, then for any $n$, there exists a suitable field extension $\mathbb{K}_1$ of $\mathbb{K}$ and a matrix $B \in \mathbb{K}_1$ such that 
 \[
 B^n = A
 \]
 \end{remark}
 So, to find the cube root of $\Phi$, one needs to extend the field $\ff_{2^2}$ such that the polynomials $x^3+\theta$ and $x^3+\theta^2$ have roots. So we have the following tower of fields 
 \[
 \ff_2 \xrightarrow{x^2+x+1} \ff_{2^2} \xrightarrow{x^3+\theta} \ff_{2^{6}}
 \]
 Let $\al^3+\theta = 0$. The cube root of $\Phi$ is computed
 \[
 \tA = \Phi^{1/3} = \begin{bmatrix} 1 & \al+\al^2 & 1+\theta^2\al + \theta\al^2 \\ 0 & \theta \al + \theta^2\al^2 & \al + \al^2 \\ 0 & \al+\al^2 & \theta^2\al + \theta \al^2
 \end{bmatrix}
 \]
 Thus a Floquet transform exists for the original PFSR and assuming $P(0) = I$, the state transformation matrices are given as follows:
\[
P(1) = \begin{bmatrix} 
\al^{2} + \al + 1 & \theta \al^{2} +\theta^2 \al & 1 \\
\theta^2 \al^{2} + \theta \al & \al^{2} + \al & 0 \\
\al^{2} + \al & \theta \al^{2} + \theta^2 \al & 0
 \end{bmatrix}  
\hspace{0.2in}
P(2)= \begin{bmatrix}
\theta \al^{2} + \al + 1 & 1 & \al^{2} + \theta \al + 1 \\
\theta^2 \al^{2} + \theta^2 \al & 0 & \theta \al^{2} + \al \\
\theta \al^{2} + \al & 0 & \al^{2} + \theta \al
\end{bmatrix}
\]
and the equivalent LFSS is 
\beq \label{eq:EQLFSS}
\tx(k+1) = \begin{bmatrix} 1 & \al+\al^2 & 1+\theta^2\al + \theta\al^2 \\ 0 & \theta \al + \theta^2\al^2 & \al + \al^2 \\ 0 & \al+\al^2 & \theta^2\al + \theta \al^2
 \end{bmatrix} \tx(k)
\eeq
We have the following solutions for different initial conditions of the original PFSS under the dynamics of $\tA$ 
 \begin{itemize}
     \item $[0,0,0]^T$ and $[1,0,0]^T$ are fixed points
     \item $[0,0,1]^T,[1,1,0]^T,[0,1,1]^T,[0,1,0]^T,[1,1,1]^T$, $[1,0,1]^T$ have orbits of length $9$
 \end{itemize}
 Since $P(0) = I$, the solutions for the initial conditions of the PFSS under the dynamics of the PFSS are 
 \begin{itemize}
     \item $[0,0,0]^T$ is a fixed point which is in accordance to corollary \ref{cor:FixPtKMA}. 
     \item $[1,0,0]^T$ has a period $3$, which is in accordance to theorem \ref{thm:SolnFloqGenN}
     \item $[0,0,1]^T,[1,1,0]^T,[0,1,1]^T,[0,1,0]^T,[1,1,1]^T$, $[1,0,1]^T$ have orbits of length $\mbox{lcm}(3,9) = 9$ which validates theorem \ref{thm:orbitLCM}.
 \end{itemize}
 The structure of PFSR does look promising to construct FSRs with long orbit lengths. As we see in the example, the PFSR with $3$-states over $\ff_2$ can generate a periodic sequence of period $9$ which is not possible at all with a time invariant FSR where the maximum orbit length is $2^3 = 8$. Also, if the PFSR constructed has a Floquet transform, then the theory developed in this work can be used to find the orbit lengths for various initial conditions. The numerical example provided does validate the theory developed and also serves as an example for a FSS where the Floquet transform involves field extension of degree six. 

\subsubsection{PFSR over extended field}
As discussed earlier, if the Floquet transform exists over an extension of the base field, then we can construct a extended PFSS over that field extension. So, instead of initializing the slave FSR with values from the base field one can initialze the FSR with the values from the extended field too and the Floquet theory helps in computation of the orbit lengths of these initial conditions also. For the equivalent LFSS in equation (\ref{eq:EQLFSS}) all the possible orbit lengths are $1,3$ and $9$. So $T$ for any initial condition can be $1,3$ or $9$ and so from theorem \ref{thm:SolnFloqGenN}, the possible orbits of the PFSS are divisors of $\mbox{lcm}(N,T)$ which are $1,3,9$.


\subsection{Galois PFSR over $\ff_5$}
Consider a $3-$bit Galois PFSR over $\ff_5$ 
Vas shown in figure \ref{fig:GPFSR}. The state transition matrix of the slave FSR of the Galois PFSR configuration at each time is given by \beq
A(k) = \begin{bmatrix} y_1(k) & 1 & 0 \\ y_2(k) & 0 & 1 \\ y_3(k) & 0 & 0\end{bmatrix}
\eeq
The Galois PFSR is non-singular if $y_3(k) \neq 0$ at all $k$. So, we construct a $2-$bit FSR for the master and a $3-$bit FSR for the slave configuration. We assume the FSR to evolve over $\ff_5$. Let the master FSR has the following transition map. 
\[
\begin{bmatrix} y_1(k+1) \\ y_2(k+1) \end{bmatrix} = \begin{bmatrix} 4 & 1 \\ 4 & 0\end{bmatrix} \begin{bmatrix} y_1(k) \\ y_2(k) \end{bmatrix} 
\]
and we initiate the master FSR with the state $[2,3]^T$ which has the following state transition matrices $A(k)$. 
\[
  A(0) = \begin{bmatrix} 2 & 1 & 0 \\ 3 & 0 & 1 \\ 1 & 0 & 0 \end{bmatrix} \hspace{0.2in} A(1) = \begin{bmatrix} 1 & 1 & 0 \\ 3 & 0 & 1 \\ 1 & 0 & 0 \end{bmatrix} \hspace{0.2in} 
  A(2) = \begin{bmatrix} 2 & 1 & 0 \\ 4 & 0 & 1 \\ 1 & 0 & 0 \end{bmatrix} 
 \]
 and the monodromy matrix $\Phi$ is computed
 \[
 \Phi = A(2)A(1)A(0) = \begin{bmatrix} 2&0&2 \\ 2&0&4 \\ 0&1&1\end{bmatrix}
 \]
 Also, the subspace $\kmA$ is computed to be 
 \[
 \kmA = \mbox{span}\{\begin{bmatrix} 0 \\0 \\ 1 \end{bmatrix}\}
 \]
 So any nonzero initial condition not in $\kmA$ will have an orbit length which is a multiple of $3$. We also compute the cube root of $\Phi$ 
 \[
 \tA = \Phi^{1/3} = \begin{bmatrix}1&2&1 \\ 0 & 1 & 2 \\ 0 & 0 & 1 \end{bmatrix}
 \]
 \begin{remark}
Once a matrix $A$ over $\ff$ is converted into Jordan form in some extension $\ff_1$ of $\ff$ with Jordan blocks $J_1,J_2,\dots, J_n$, then the $n-th$ root of $A$ exists if the $n-th$ root exists for each Jordan blocks $J_i$ over some extension of $\ff_1$.   
 \end{remark}
 Since the cube root of $\Phi$ exists, the Galois PFSR allows a Floquet transform and assuming $P(0) = I$, the matrices $P(1)$ and $P(2)$ can be computed. 
 It can be seen that for all non-zero initial conditions, $\tA$ has an orbit of length $5$ which gives $T = 5$.
 \begin{itemize}
     \item $x(0) = \bar{0}$ is a fixed point.  
     \item For $x(0) \notin \kmA$ of the PFSR, the length of orbit of $x(0) = 5 \times 3 = 15$ which is verified to be true. 
     \item For non-zero $x(0) \in \kmA$, the orbit length should divide $15$. It is computed that any non-zero $x(0) \in \kmA$, the orbit length is also 15. 
 \end{itemize}

\section{Conclusion}
A concrete technological dynamical system obtained by compositional FSRs can be modeled as a linear FSS with periodically varying co-efficients called PFSS. The paper develops a theory of PFSS to show that the problem of determining the lengths of periodic orbits and finding initial conditions for orbits of prescribed periods is computationally feasible for such systems with the help of an analogous Floquet theory of these systems. When the Floquet transform exists over an extension field, a concept of equivalent linear systems over the extended field is introduced and the original PFSS itself is then considered with the state space extended over the field. Such as extension of the original PFSS over extended field is meaningful only for finite state systems. In real state space systems complex co-efficients in a time invariant system do not make physical sense. However no such physical limitation exists for finite state systems. Hence an extended system defined over extension field is a generalization of periodic dynamical systems useful for computational purpose. Several examples of computation of orbits with and without Floquet transform are presented along with an example of equivalent system over extension field. The theory presented in this paper is expected to be of interest to a wide community of researchers in application fields such as cryptography, systems biology and control theory.

\bibliographystyle{ieeetr}



\end{document}